%
\documentclass{article}
\usepackage{graphicx}
%
\usepackage{hyperref}

\usepackage[utf8]{inputenc}
\usepackage[svgnames]{xcolor}
\usepackage{amsthm}
\usepackage{amssymb}
\usepackage{array}
\usepackage{moresize}

\usepackage{subcaption}

\setlength{\marginparsep}{3pt}
\setlength{\marginparwidth}{1.2in}







\newcommand{\define}[1]{\emph{#1}}
\newcommand{\mA}{\ensuremath{\mathcal{A}}}
\newcommand{\mB}{\ensuremath{\mathcal{B}}}
\newcommand{\mG}{\ensuremath{\mathcal{G}}}
\newcommand{\mH}{\ensuremath{\mathcal{H}}}
\newcommand{\mI}{\ensuremath{\mathcal{I}}}
\newcommand{\mP}{\ensuremath{\mathcal{P}}}

\newcommand{\net}{\emph{net}}
\newcommand{\codomino}{\emph{co-domino}}
\newcommand{\domino}{\emph{domino}}
\newcommand{\wheel}{\emph{wheel}}

\newtheorem{observation}{Observation}
\newtheorem{conjecture}{Conjecture}
\newtheorem{theorem}{Theorem}
\newtheorem{lemma}{Lemma}
\newtheorem{corollary}{Corollary}



\def\titulo{Biclique Graphs of $K_3$-free Graphs and Bipartite Graphs}
\begin{document}
\thispagestyle{empty}
\begin{center}
\Large
 {\bf \titulo\footnote{This work was partially supported by
  ANPCyT (PICT-2013-2205), 
  CONICET, 
  CAPES and 
  CNPq (428941/2016-8). 
}
}
\vspace*{1cm}

\large
Marina Groshaus \\
 \normalsize
Universidade Tecnológica Federal do Paraná, Brazil\\
\texttt{marinagroshaus@yahoo.es} \\

\vspace*{1cm}

\large
A. L. P. Guedes \\ 
 \normalsize
Universidade Federal do Paraná, Brazil\\
\texttt{andre@inf.ufpr.br}\\
\vspace*{.5cm}

\end{center}

\begin{abstract}
  A \define{biclique} of a graph is a maximal complete bipartite subgraph.
  The  \define{biclique graph}  of a  graph $G$,  $KB(G)$, defined  as the
  intersection  graph  of  the  bicliques of  $G$,  was  introduced  and
  characterized in 2010.
  However,  this  characterization  does  not lead  to  polynomial  time
  recognition  algorithms.   The  time  complexity  of  its  recognition
  problem  remains open.   There are  some  works on  this problem  when
  restricted to some classes.
  In this  work we give  a characterization of  the biclique graph  of a
  $K_3$-free graph $G$.  We prove that  $KB(G)$ is the square graph of a
  particular graph which we  call \define{Mutually Included Biclique Graph}
  of $G$  ($KB_m(G)$).  Although it does  not lead to a  polynomial time
  recognition  algorithm, it  gives a  new tool  to prove  properties of
  biclique  graphs   (restricted  to  $K_3$-free  graphs)   using  known
  properties of square graphs. 
  For instance  we generalize a property  about induced ${P_3}'$s in biclique
  graphs to a property about stars and proved a conjecture posted 
  by Groshaus and Montero, when restricted to $K_3$-free graphs.
  Also we characterize the class of biclique graphs of bipartite
  graphs.  We prove  that  $KB($bipartite$) =  ($IIC-comparability$)^2$,
  where \define{IIC-compa\-rability}  is a subclass of  comparability graphs
  that we call \define{Interval Intersection Closed Comparability}.
\end{abstract}

\normalsize
\vspace*{1cm}
\noindent {\bf Keywords:} Bicliques; Biclique graphs; Triangle-free graphs;
Bipartite graphs; Comparability graphs; Power of a graph
\newpage


\section{Introduction}

A \define{biclique}  of a graph is  a vertex set that  induces a maximal
complete  bipartite subgraph.   The \define{biclique  graph} of  a graph
$G$, denoted by  $KB(G)$, is the intersection graph of  the bicliques of
$G$.
The  biclique   graph  was   introduced  by  Groshaus   and  Szwarcfiter
\cite{Groshaus2010}, based on the concept of clique graphs.  They gave a
characterization of biclique graphs  (in general) and a characterization
of biclique graphs of bipartite graphs.
The  time  complexity of  the  problem  of recognizing  biclique  graphs
remains open.

Bicliques in  graphs have applications  in various fields,  for example,
bio\-lo\-gy: protein-protein  interaction networks~\cite{Bu2003}, social
networks:        web        community        discovery~\cite{Kumar1999},
genetics~\cite{Atluri2010},  medicine~\cite{Nagarajan2008},  information
theory~\cite{Haemers2001}.  More applications  (including some of these)
can be found in the work of Liu, Sim and Li~\cite{Liu2006}.

The biclique graph can be considered  as a graph operator: given a graph
$G$,  the operator  $KB$  returns  the biclique  graph  of $G$,  $KB(G)$
\cite{GroshausGuedesMontero2016,Groshaus2013,Groshaus2009}.
Some problems related to graph operators are studied in relation to some
classes of graphs.   Given a graph operator $\mH$ and  a class $\mA$, it
is  studied the  problem of  recognizing  the class  $\mH(\mA)$, or  the
problem of recognizing the class $\mH^{-1}(\mA)$.
These problems have been studied in the context of the clique graph, $K$
(clique  operator).  There  are works  about $K(\mA)$,  where $\mA$  is
clique-Helly,  chordal,  interval,   split,  diamond-free,  dismantable
graphs,               arc-circular              graphs               etc
\cite{Bondy2003,Duran2001,Larrion2002,Lin2010,Prisner1999}.

There are  few works  in the literature  about recognizing  the biclique
graphs            of            some            graph            classes
\cite{CruzGroshausGuedesPuppo2020,GroshausGuedesPuppo2016,Groshaus2006,Puppo2019}.

In this  work we prove that  every biclique graph of  a $K_3$-free graph
(triangle-free)   is    the   square    of   some   graph,    that   is,
$KB(K_3$-free$) \subset  (\mG)^2$, where $\mG$  denote the class  of all
graphs and $(\mA)^2$ denote the class of the square graphs of the graphs
of the class $\mA$.  This result gives a tool for studying other classes
of biclique graphs.
Known  results on  square  graphs follow  directly  for biclique  graphs
(using  known  properties of  square  graphs).   Also, some  results  on
biclique graphs of graphs, when  restricted to $K_3$-free graphs, can be
easily proven using the fact that it is a square graph. For example, the
fact  that  every  $P_3$  is  contained   in  a  diamond  or  a  $3$-fan
\cite{Groshaus2010}, the fact that the  number of vertices of degree $2$
is  at most  $n/2$ and  the  fact that  the family  of neighbourhoods  of
vertices     of    degree     2    satisfy     the    Helly     property
\cite{GroshausMontero2019}. 

Groshaus and Montero presented a conjecture stating that
a certain structure is forbidden in biclique graphs \cite[Conjecture 6.2]{GroshausMontero2019}.
We prove that this conjecture  holds when  $G$ is a  $K_3$-free graph,
using the fact that $H$ is the square of some particular  graph.

\begin{conjecture}[\cite{GroshausMontero2019}]
  \label{conj.montero}
  If $H=KB(G)$  for some graph $G$,  where $H$ is not  isomorphic to the
  diamond then there do not exist  $v_1, v_2, \ldots, v_n \in V(H)$ such
  that $N_H(v_1)  = N_H(v_2) =  \cdots = N_H(v_n)$ and  their neighbours
  are contained in a $K_n$ for $n \geq 2$.
\end{conjecture}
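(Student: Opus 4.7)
The plan is to translate $H = KB(G)$ into biclique-intersection language and analyze the conjecture's forbidden structure directly. Let $B_i$ be the biclique of $G$ corresponding to $v_i$, and $C_1,\dots,C_m$ (with $m \leq n$) the bicliques corresponding to the common neighborhood $S$. The hypotheses translate to four conditions: the $B_i$ are pairwise disjoint; every $B_i$ meets every $C_j$; the $C_j$ pairwise meet; and no biclique of $G$ outside $\{C_1,\dots,C_m\}$ meets any $B_i$. The conjecture splits into two regimes: for $n = 2$, I must show that this configuration forces $H$ to be exactly the diamond; for $n \geq 3$, I must show the configuration is outright impossible in a biclique graph.

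For $n = 2$, I would fix witnesses $x_{ij} \in B_i \cap C_j$ for $i, j \in \{1, 2\}$ and $z \in C_1 \cap C_2$, and carry out a case analysis on which side of each bipartition these witnesses occupy. In every case, the key claim to establish is that any other maximal complete bipartite subgraph of $G$ must either extend $B_1 \cup B_2$ into a single biclique (violating disjointness) or create a new biclique meeting some $B_i$ (violating the fourth condition). Combined with the maximality of each $B_i$ and $C_j$, this should force $G$ to have no bicliques beyond $\{B_1, B_2, C_1, C_2\}$, so that $H$ has exactly the four vertices of the diamond.

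For $n \geq 3$, the approach is a pigeonhole argument on the witnesses: we have $n \geq 3$ pairwise disjoint $B_i$ intersecting $m \leq n$ pairwise intersecting $C_j$, yielding at least $n$ distinct witness points in any fixed $C_j$, one drawn from each $B_i$. These witnesses lie in $C_j$, which is complete bipartite, so they split across its two sides; I would combine this with the requirement that each $C_j$ must intersect every other $C_{j'}$ to argue that the resulting intersection pattern either forces two of the $B_i$ to overlap (contradicting disjointness) or generates a new complete bipartite subgraph of $G$ outside $\{C_1, \ldots, C_m\}$ meeting some $B_i$ (contradicting the fourth condition). The main obstacle is making this argument robust when bicliques of $G$ share vertices in complicated ways (as they can whenever $G$ contains triangles): the case analysis on how witnesses distribute across bipartition sides and how maximality constrains extensions to new bicliques is where the uniform proof across all $n$ is most likely to require a fresh idea beyond the square-graph machinery already developed in this paper for the triangle-free setting.
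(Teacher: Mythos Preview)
The statement you are addressing is stated in the paper as a \emph{conjecture}, not a theorem: the paper does not prove it in full generality, and there is no proof of Conjecture~\ref{conj.montero} in the paper for you to match. What the paper does prove is the restriction to triangle-free $G$ (Corollary~\ref{cor.conjectureGM}), and its method is entirely different from yours: it never touches the bicliques of $G$ directly, but instead uses Theorem~\ref{thm.KBm-square} to write $H=(KB_m(G))^2$, and then argues inside the square-root graph $KB_m(G)$ via the $\net^*$-free property (Corollary~\ref{cor.netstarfree}) and Theorem~\ref{thm.independent}. Your direct biclique-intersection approach is therefore not a variant of the paper's argument but an attempt at the full open problem.

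As a proof of the full conjecture, your proposal has genuine gaps. For $n=2$ you assert that any further biclique of $G$ must either merge $B_1\cup B_2$ or meet some $B_i$, but you give no mechanism for this; nothing prevents $G$ from having additional bicliques disjoint from both $B_1$ and $B_2$ (such a biclique would simply be a vertex of $H$ non-adjacent to $v_1$ and $v_2$), so the claim that $H$ must have exactly four vertices does not follow from the hypotheses as stated. You also implicitly assume $m=2$, whereas the conjecture allows $m\le n$, so the case of a single common neighbour needs separate treatment. For $n\ge 3$ your pigeonhole sketch is not yet an argument: having $n$ witness points on the two sides of one $C_j$ does not by itself force two $B_i$ to intersect, nor does it produce a new biclique meeting some $B_i$; you yourself flag that this step ``is most likely to require a fresh idea,'' and indeed that is exactly the part of the conjecture that remains open beyond the triangle-free case handled in the paper.
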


A  \define{comparability graph}  $G$ is  such that  there is  a partially
ordered set (poset) $(V(G),\leq_G)$ where $uv \in E(G)$ if and
only if $u$ and $v$ are comparable by $\leq_G$ \cite{Brandstaedt1999}.
%
%
We define a subclass of comparability graphs, the class
of      \define{interval     intersection      closed     comparability}
(IIC-comparability)  graphs, and  we prove  that the  class of  biclique
graphs of bipartite graphs is equal to the class of the square graphs of
IIC-comparability              graphs,             that              is,
$KB($bipartite$)    =     ($IIC-comparability$)^2$,    giving    another
characterization of biclique graphs of bipartite graphs.

\subsection{Some Definitions and Notations}

Let $G$ be a  graph with vertex set $V(G)$ and  edge set $E(G)$.  Denote
the set of \define{neighbours} of a vertex $v \in V(G)$ as $N_G(v)$.
Let $S \subseteq  V(G)$ and define $N_G^*(S)$ to be  the set of vertices
of $G$ that are incident to every vertex of $S$. That is,
$N_G^*(S) = \bigcap_{v \in S} N_G(v).$
Note that if $U \subseteq S$, $N_G^*(S) \subseteq N_G^*(U)$.

Given   a    poset   $\mP   =    (C,\leq)$   and   $x   \in    C$,   let
$I_{\mP}^-(x)    =    \{y    \in    C   \mid    y    \leq    x\}$    and
$I_{\mP}^+(x)  = \{y  \in  C  \mid x  \leq  y\}$  be, respectively,  the
\define{predecessors interval} and \define{successors interval} of $x$ in
$\mP$.
We say  that a poset  $\mP = (C,\leq)$ is  \define{interval intersection
  closed            (IIC)}             if            the            sets
$\mI_{\mP}^-    =     \{I_{\mP}^-(x)    \mid    x    \in     C\}$    and
$\mI_{\mP}^+ =  \{I_{\mP}^+(x) \mid  x \in  C\}$ are  \define{closed under
  intersection}. That is,  for every pair $u$, $v \in  C$, the following
sentences are true:
\begin{itemize}
\item if $I_{\mP}^-(u) \cap I_{\mP}^-(v) \neq \emptyset$ then there is a
  $w \in C$ such that $I_{\mP}^-(w) = I_{\mP}^-(u) \cap I_{\mP}^-(v)$);
\item if $I_{\mP}^+(u) \cap I_{\mP}^+(v) \neq \emptyset$ then there is a
  $w \in C$ such that $I_{\mP}^+(w) = I_{\mP}^+(u) \cap I_{\mP}^+(v)$).
\end{itemize}


Let  the graph  class  \define{IIC-comparability (Interval  Intersection
  Closed  Comparability)}  be the  class  of  comparability graphs  with
posets that are IIC.

\section{Mutually Included Biclique Graph}

Denote a  biclique $P$ of $G$,  with bipartition $(X,Y)$, as  $XY$. That
is, $XY = X \cup Y$.
Given  a  biclique   $P=XY$  and  a  vertex  $v  \not\in   P$  then  (i)
$v \not\in N_G^*(X) \cup N_G^*(Y)$ or  (ii) there are vertices $x \in X$
and $y \in  Y$ such that $x, y  \in N_G(v)$. Note that in  case (ii) the
vertices $x$, $y$ and $v$ form a $K_3$. So, if $G$ is a $K_3$-free graph
(ii) is always false and if $v \not\in P$ then (i) holds.

\begin{observation}\label{obs.Nstar}
  Given a  $K_3$-free graph $G$, two  independent sets, $X$ and  $Y$, of
  $G$ form  a biclique $XY$  of $G$  iff $N_G^*(X) =  Y$ and
  $N_G^*(Y) = X$.
\end{observation}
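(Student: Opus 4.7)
The plan is to exploit the dichotomy established in the paragraph immediately preceding the observation: for any biclique $P = XY$ in a $K_3$-free graph $G$ and any vertex $v \notin P$, one necessarily has $v \notin N_G^*(X) \cup N_G^*(Y)$. Both directions of the biconditional will follow from this dichotomy plus the defining properties of a biclique (complete bipartiteness and maximality).

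For the forward direction, assume $XY$ is a biclique. Complete bipartiteness gives $Y \subseteq N_G^*(X)$ and $X \subseteq N_G^*(Y)$ directly, because each $y \in Y$ is adjacent to every $x \in X$ and vice versa. For the reverse inclusion $N_G^*(X) \subseteq Y$, take any $v \in N_G^*(X)$. I would first rule out $v \in X$, since $v$ cannot be adjacent to itself, so $v \in N_G^*(X) \cap X$ is impossible (assuming $X$ is nonempty, which it is for a biclique). If $v \notin P$, the preceding discussion forces $v \notin N_G^*(X)$, contradicting the choice of $v$. Hence $v \in P \setminus X = Y$, so $N_G^*(X) \subseteq Y$. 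A symmetric argument gives $N_G^*(Y) = X$.

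For the backward direction, suppose $X$ and $Y$ are independent sets with $N_G^*(X) = Y$ and $N_G^*(Y) = X$. Every $y \in Y = N_G^*(X)$ is adjacent to all of $X$, so the subgraph induced by $X \cup Y$ is complete bipartite with parts $X$ and $Y$ (and $X \cap Y = \emptyset$ because a vertex in the intersection would lie in $X \cap N_G^*(X)$, forcing a self-loop). It remains to prove maximality: if some $v \notin X \cup Y$ extended the biclique, then without loss of generality $v$ would be adjacent to every vertex of $Y$, which would place $v \in N_G^*(Y) = X$, a contradiction. Therefore $XY$ is a biclique.

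I do not expect any real obstacle: the statement is a clean restatement of the definition of biclique once the $K_3$-free hypothesis is used to suppress the triangle case (ii) in the preceding paragraph. The only care points are the trivial edge cases that a vertex cannot lie in both a set and its own $N_G^*$, and that $X \cap Y = \emptyset$ is forced rather than assumed.
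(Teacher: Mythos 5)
Your proof is correct and follows essentially the same route the paper intends: the paper states this as an unproved observation whose justification is exactly the dichotomy in the preceding paragraph (for $v \notin P$, $K_3$-freeness kills case (ii), forcing $v \notin N_G^*(X) \cup N_G^*(Y)$), and you combine that with complete bipartiteness and maximality just as the authors implicitly do. Your extra care about the edge cases ($v \notin X$ since a vertex is not its own neighbour, and $X \cap Y = \emptyset$ being forced) is a harmless elaboration, not a deviation.
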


Note  that  no part  of  a  biclique  intersects both parts  of  another
biclique. That is, if $P = X_PY_P$ and $Q = X_QY_Q$ are two bicliques of
a graph $G$, $X_P \cap X_Q = \emptyset$ or $X_P \cap Y_Q = \emptyset$.
So,    assume     that    $X_P    \cap    Y_Q     =    \emptyset$    and
$X_Q  \cap   Y_P  =   \emptyset$.   We   say  that   $P$  and   $Q$  are
\define{mutually  included}  if  $X_Q  \subset  X_P$  and  $Y_P  \subset
Y_Q$. See Figure~\ref{fig:mutuallyAB}.

\begin{figure}[htb]
\centering
\begin{subfigure}{.5\textwidth}
  \centering
  \includegraphics[width=4cm]{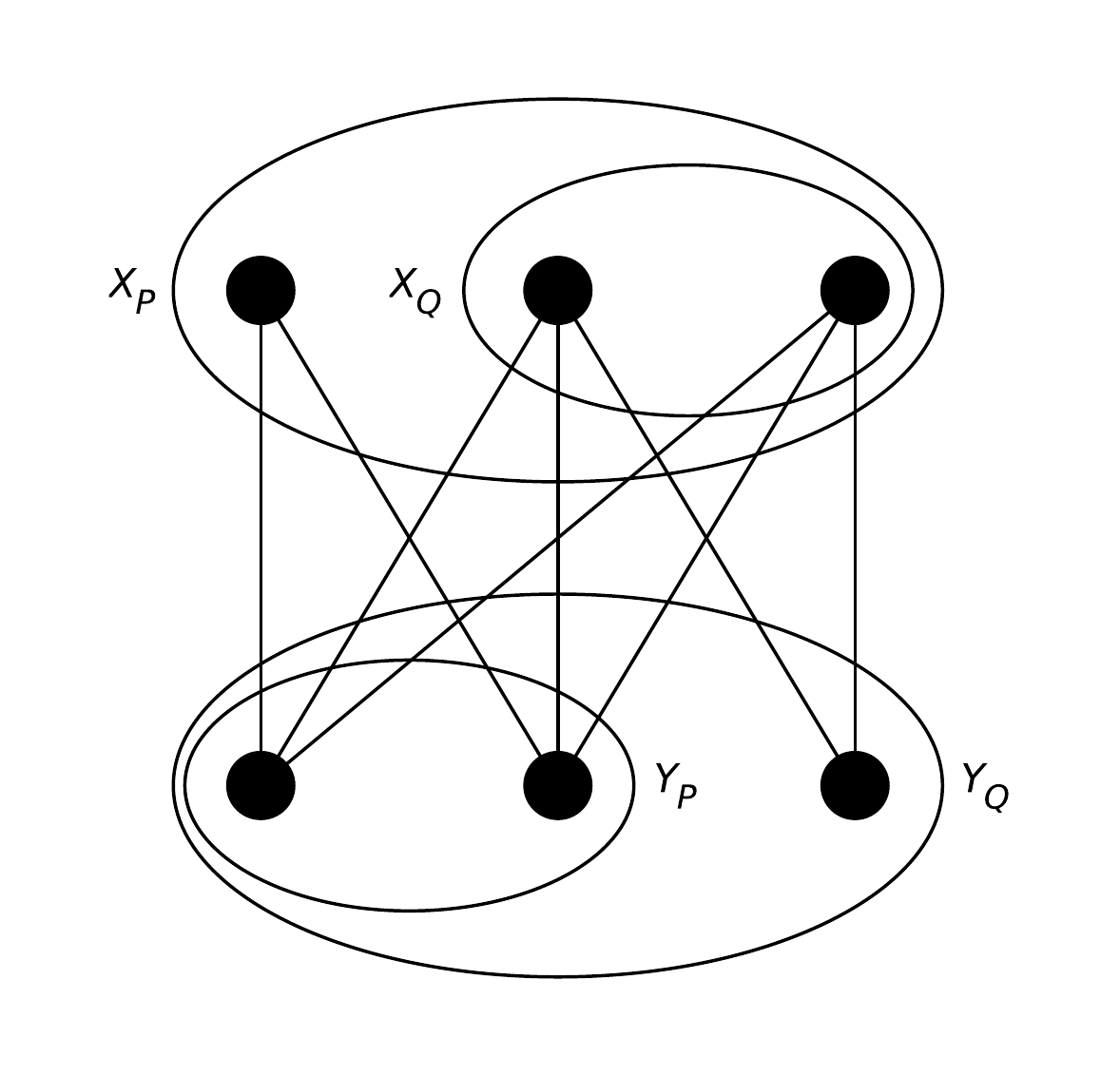}
  \caption{}
  \label{fig:mutuallyAB}
\end{subfigure}%
\begin{subfigure}{.5\textwidth}
  \centering
  \includegraphics[width=5cm]{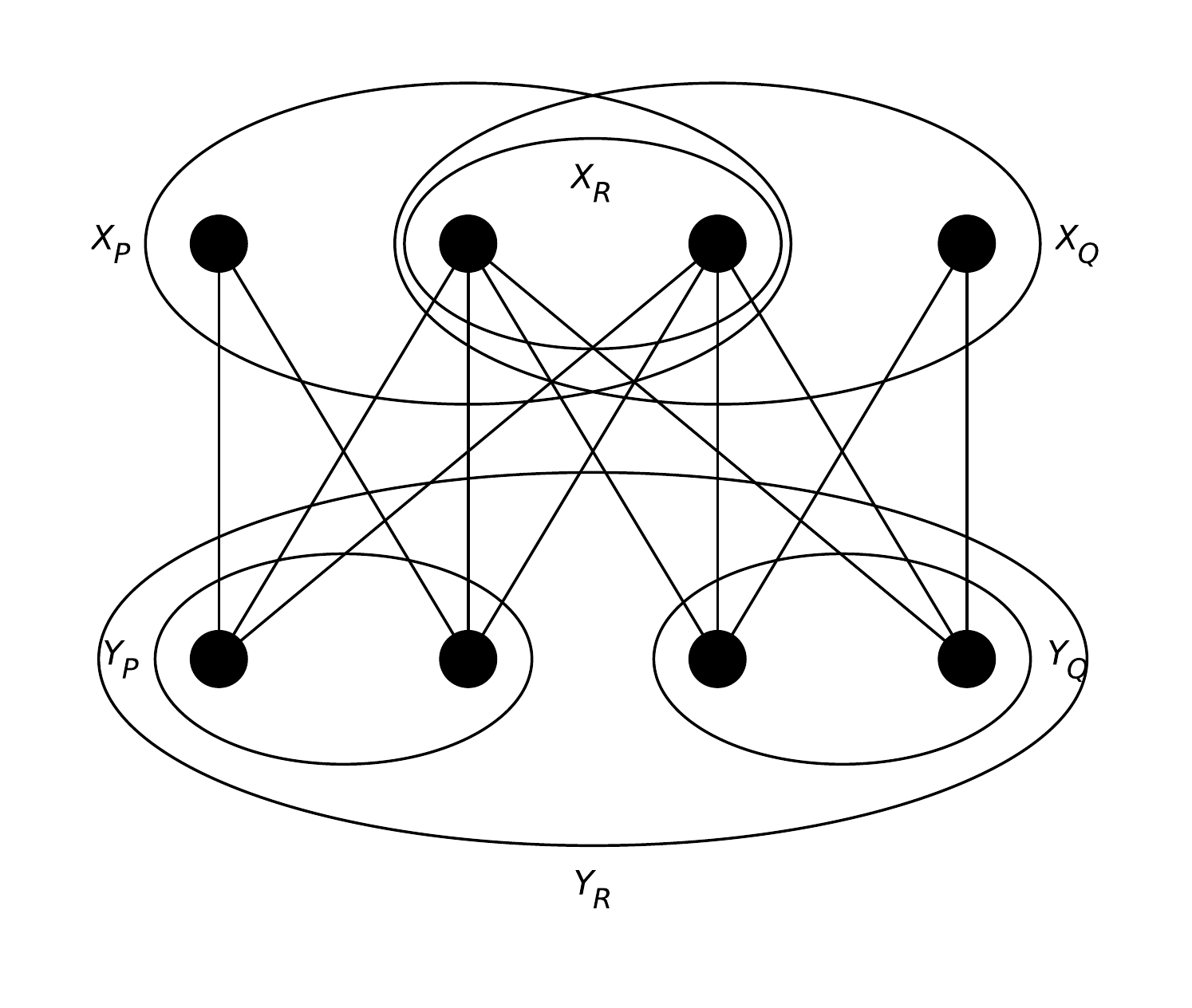}
  \caption{}
  \label{fig:not-mutuallyAll}
\end{subfigure}%
\caption{(a) Mutually included bicliques (b) Two   intersecting    not   mutually
  included bicliques, $P$  and $Q$ and a biclique  $R$ mutually included
  with both.}
\end{figure}

Define $KB_m(G)$, the \define{mutually  included biclique graph of $G$},
as the graph with  the bicliques of $G$ as its  vertex set and $PQ$
is an edge iff  $P$ and $Q$ are mutually included.  Note that
$KB_m(G) \subseteq KB(G)$ (with the same vertex set).

\begin{lemma}\label{lem.bicliqueintersecting}
  If $P  = X_PY_P$ and  $Q = X_QY_Q$ are  two bicliques of  a $K_3$-free
  graph   $G$,    that   are   not   mutually    included,   such   that
  $X_P \cap  X_Q \neq \emptyset$ then  there is a biclique  $R = X_RY_R$
  such that $X_R = X_P \cap X_Q$ and $Y_R \supseteq Y_P \cup Y_Q$.
\end{lemma}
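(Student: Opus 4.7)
The plan is to invoke Observation~\ref{obs.Nstar} by constructing the biclique $R$ explicitly. Set $X_R := X_P \cap X_Q$ and $Y_R := N_G^*(X_R)$, and then verify that these two sets are both independent and satisfy $N_G^*(X_R) = Y_R$ and $N_G^*(Y_R) = X_R$, so that by Observation~\ref{obs.Nstar} the pair forms a biclique with the required inclusion properties.

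First I would check that $X_R$ is independent, which is immediate since $X_R \subseteq X_P$ and $X_P$ is independent. Next I would establish $Y_R \supseteq Y_P \cup Y_Q$ using the monotonicity remark from the preliminaries (if $U \subseteq S$, then $N_G^*(S) \subseteq N_G^*(U)$): since $X_R \subseteq X_P$ we get $Y_R = N_G^*(X_R) \supseteq N_G^*(X_P) = Y_P$, where the last equality uses Observation~\ref{obs.Nstar} applied to the biclique $P$; symmetrically $Y_R \supseteq Y_Q$. To see that $Y_R$ itself is independent, I would use $K_3$-freeness: since $X_R$ is non-empty, pick any $x \in X_R$; two adjacent vertices $u_1, u_2 \in Y_R$ would form a triangle with $x$, a contradiction.

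The key step is proving the equality $N_G^*(Y_R) = X_R$, as this is what closes the argument. The inclusion $X_R \subseteq N_G^*(Y_R)$ is built in: every vertex of $X_R$ is adjacent to every vertex of $N_G^*(X_R) = Y_R$ by definition. For the reverse inclusion, suppose $u \in N_G^*(Y_R)$. Then $u$ is adjacent to every vertex of $Y_R$, in particular to every vertex of $Y_P$ (since $Y_P \subseteq Y_R$), so $u \in N_G^*(Y_P) = X_P$ by Observation~\ref{obs.Nstar} applied to $P$. The same argument with $Q$ yields $u \in X_Q$, and hence $u \in X_P \cap X_Q = X_R$.

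The main subtlety I expect lies in this reverse inclusion, which is exactly where the biclique structure of $P$ and $Q$ (captured by Observation~\ref{obs.Nstar}) is essential; the inclusion $Y_R \supseteq Y_P \cup Y_Q$ alone is not enough without the converse identities $N_G^*(Y_P) = X_P$ and $N_G^*(Y_Q) = X_Q$. Notice that the hypothesis that $P$ and $Q$ are not mutually included is not actually required for the existence of $R$ as constructed; it presumably plays a role in the context in which this lemma is later applied rather than in the proof itself.
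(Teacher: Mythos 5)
Your proposal is correct and follows essentially the same route as the paper: both define $X_R = X_P \cap X_Q$ and $Y_R = N_G^*(X_R)$, observe $Y_P \cup Y_Q \subseteq Y_R$ by monotonicity, and verify the biclique condition via Observation~\ref{obs.Nstar}. If anything, your direct derivation of the reverse inclusion $N_G^*(Y_R) \subseteq X_P \cap X_Q$ using both $P$ and $Q$ is slightly more careful than the paper's contradiction step, which appeals only to the maximality of $P$ and so does not explicitly dispose of a putative vertex $x \in X_P \setminus X_Q$.
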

\begin{proof}
  Let $G$ be a $K_3$-free graph and let $P = X_PY_P$ and $Q = X_QY_Q$ be
  two  bicliques  of  $G$  that  are  not  mutually  included  and  that
  $X_P  \cap   X_Q  \neq  \emptyset$.    Let  $X_R  =  X_P   \cap  X_Q$,
  $Y_R = N_G^*(X_R)$.  Note that $Y_R$  is an independent set (as $G$ is
  a  $K_3$-free graph)  and  that  $(Y_P \cup  Y_Q)  \subseteq Y_R$  (by
  definition of $N_G^*(X_R)$). See Figure~\ref{fig:not-mutuallyAll}.

  Suppose  $R   =  X_RY_R$  is  not   a  biclique  of  $G$.    Then,  by
  Observation~\ref{obs.Nstar},   $N_G^*(Y_R)   \neq  X_R$.    By   definition,
  $X_R    \subseteq    N_G^*(Y_R)$,    so     there    is    a    vertex
  $x \in N_G^*(Y_R) \setminus X_R$.  As $x$ is neighbour of every vertex
  of  $Y_R$,  it is  also  neighbour  of  every  vertex of  $Y_P$,  then
  $P \cup \{x\}$ induces a complete  bipartite subgraph and $P$ is not a
  biclique. So,  there is  no such  vertex $x$,  $N_G^*(Y_R) =  X_R$ and
  $R = X_RY_R$ is a biclique of $G$. \qed
\end{proof}

\begin{corollary}\label{cor.inter.mutuallyorR}
  If $P$  and $Q$ are two  intersecting bicliques of a  $K_3$-free graph
  $G$ then $P$ and $Q$ are mutually  included or there is a biclique $R$
  that is mutually included both with $P$ and $Q$.
\end{corollary}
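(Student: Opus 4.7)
The plan is to reduce directly to Lemma~\ref{lem.bicliqueintersecting}, producing the biclique $R$ from that lemma and then verifying that the inclusions required by the definition of ``mutually included'' are in fact strict.

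First, I would note that since $P$ and $Q$ intersect, by the preceding observation that no part of a biclique intersects both parts of another, we may (by relabeling the two sides of $Q$) assume $X_P \cap X_Q \neq \emptyset$. If $P$ and $Q$ are mutually included, there is nothing to prove, so assume they are not. Then Lemma~\ref{lem.bicliqueintersecting} gives a biclique $R = X_R Y_R$ with $X_R = X_P \cap X_Q$ and $Y_R \supseteq Y_P \cup Y_Q$. In particular $X_R \subseteq X_P$, $X_R \subseteq X_Q$, $Y_P \subseteq Y_R$, and $Y_Q \subseteq Y_R$.

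The real content is upgrading these containments to strict ones. For the $X$-side: if $X_R = X_P$ then $X_P \subseteq X_Q$, and since $P$ and $Q$ are distinct bicliques (if $X_P = X_Q$ then by Observation~\ref{obs.Nstar} we would get $Y_P = N_G^*(X_P) = N_G^*(X_Q) = Y_Q$, hence $P = Q$), we would have $X_P \subsetneq X_Q$, which by Observation~\ref{obs.Nstar} forces $Y_Q = N_G^*(X_Q) \subsetneq N_G^*(X_P) = Y_P$; but this says $P$ and $Q$ are mutually included, contradicting our standing assumption. Hence $X_R \subsetneq X_P$, and symmetrically $X_R \subsetneq X_Q$. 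For the $Y$-side: if $Y_R = Y_P$, then $Y_Q \subseteq Y_P$, and an entirely analogous argument (using Observation~\ref{obs.Nstar} to pass between $X$ and $Y$) again forces $P$ and $Q$ to be mutually included, a contradiction; so $Y_P \subsetneq Y_R$, and symmetrically $Y_Q \subsetneq Y_R$.

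Combining the strict inclusions $X_R \subsetneq X_P$ with $Y_P \subsetneq Y_R$ shows that $R$ is mutually included with $P$, and likewise $X_R \subsetneq X_Q$ with $Y_Q \subsetneq Y_R$ shows that $R$ is mutually included with $Q$. The only delicate step is checking strictness; the main obstacle is being careful that ``mutually included'' requires proper containment and that both possible degeneracies (equality on the $X$-side or on the $Y$-side) reduce to one of the two hypotheses we have already excluded, namely $P = Q$ or $P$ and $Q$ mutually included.
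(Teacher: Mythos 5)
Your proposal is correct and follows exactly the route the paper intends: the corollary is stated without proof immediately after Lemma~\ref{lem.bicliqueintersecting}, and the intended argument is precisely to apply that lemma (after relabeling so that $X_P \cap X_Q \neq \emptyset$) and read off that $R$ is mutually included with both $P$ and $Q$. Your extra care in checking that the containments are strict (via Observation~\ref{obs.Nstar} and the exclusion of the cases $P=Q$ and ``$P$, $Q$ mutually included'') fills in details the paper leaves implicit, but does not change the approach.
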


\begin{lemma}\label{lem.commonMutually}
  If  $P$ and  $Q$  are two  bicliques  such that  there  is a  biclique
  mutually     included    with     both    $P$     and    $Q$,     then
  $P \cap Q \neq \emptyset$.
\end{lemma}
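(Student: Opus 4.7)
The plan is to fix a labeling of $R$ as $(X_R,Y_R)$ and unfold the definition of mutual inclusion against $P$ and against $Q$ into cases. The mutual-inclusion relation, once the label of $R$ is fixed, forces a specific labeling on $P$: either (a) $X_R \subsetneq X_P$ and $Y_P \subsetneq Y_R$, or (b) $X_P \subsetneq X_R$ and $Y_R \subsetneq Y_P$. Analogously for $Q$, we get (c) $X_R \subsetneq X_Q$ and $Y_Q \subsetneq Y_R$, or (d) $X_Q \subsetneq X_R$ and $Y_R \subsetneq Y_Q$. This yields four combinations, and in each I would exhibit a non-empty subset of $P \cap Q$.

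In combinations (a)+(c) we have $X_R \subseteq X_P \cap X_Q$; in (a)+(d) we have $X_Q \subsetneq X_R \subsetneq X_P$, so $X_Q \subseteq X_P \cap X_Q$; (b)+(c) is symmetric and gives $X_P \subseteq X_P \cap X_Q$. In each of these three cases, since the $X$-part of any biclique is non-empty, $P \cap Q$ is non-empty. The remaining combination (b)+(d) cannot be handled on the $X$ side, but there we have $Y_R \subseteq Y_P$ and $Y_R \subseteq Y_Q$, so $Y_R \subseteq Y_P \cap Y_Q \subseteq P \cap Q$, and $Y_R$ is non-empty. Hence $P \cap Q \neq \emptyset$ in every case.

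The only subtlety I anticipate, and the step I would be careful about, is the bookkeeping of labelings: I need to argue that, once $R$ is labeled, the mutual-inclusion relation with $P$ (respectively with $Q$) pins down a labeling of $P$ (respectively $Q$) so that the inclusions on the $X$-sides and the reverse inclusions on the $Y$-sides can be directly chained across $R$. This follows from the preceding remark of the paper that no part of one biclique meets both parts of another, which makes the condition $X_R \cap Y_P = X_P \cap Y_R = \emptyset$ (and similarly for $Q$) attainable and consistent with $R$'s fixed labeling. Once this is in place, the rest is the short four-case check above, and no extra appeal to $K_3$-freeness or to Lemma~\ref{lem.bicliqueintersecting} is needed.
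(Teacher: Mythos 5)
Your proof is correct and takes essentially the same route as the paper's: fix a labeling of $R$, split into cases according to the direction of each mutual inclusion, and chain the inclusions of the $X$-parts or $Y$-parts through $R$ to exhibit a non-empty subset of $P \cap Q$. The paper merely compresses your four cases into two by a w.l.o.g.\ (swapping the roles of the $X$- and $Y$-sides), so there is no substantive difference.
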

\begin{proof}
  Suppose $P = X_PY_P$ and $Q = X_QY_Q$.  Let $R = X_RY_R$ be a biclique
  mutually   include  with   $P$   and  $Q$.    Suppose  w.l.o.g.   that
  $X_P \subset X_R$  and $Y_R \subset Y_P$. Also suppose  that $X_Q \cap
  Y_R = \emptyset$. If $X_R \subset  X_Q$, then $X_P \subset X_Q$ and $P
  \cap Q \neq \emptyset$. On the  other hand, if $X_Q \subset X_R$, then
  $Y_R \subset Y_Q$ and $Y_R \subseteq Y_P  \cap Y_Q$ and also $P \cap Q
  \neq \emptyset$.
  \qed
\end{proof}

\begin{theorem}\label{thm.KBm-square}
  If $G$ is a $K_3$-free graph, then $KB(G) = (KB_m(G))^2$.
\end{theorem}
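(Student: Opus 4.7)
The plan is to prove set equality of edge sets (the vertex sets of $KB(G)$ and $KB_m(G)$ already coincide, both being the set of bicliques of $G$). By definition of the square, $PQ$ is an edge of $(KB_m(G))^2$ iff $P\neq Q$ and there is a walk of length at most $2$ from $P$ to $Q$ in $KB_m(G)$, i.e.\ either $P$ and $Q$ are mutually included, or there exists a biclique $R$ mutually included with both $P$ and $Q$. So I need to show that, for two distinct bicliques $P,Q$ of $G$,
\[
P \cap Q \neq \emptyset \iff \bigl(P,Q \text{ mutually included}\bigr) \text{ or } \bigl(\exists R : R\text{ mutually included with both }P\text{ and }Q\bigr).
\]

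For the forward implication, I would invoke Corollary~\ref{cor.inter.mutuallyorR} directly: if $P$ and $Q$ are two intersecting bicliques of the $K_3$-free graph $G$, then either they are themselves mutually included, or there exists a biclique $R$ mutually included with both. In either case $P$ and $Q$ are at distance at most $2$ in $KB_m(G)$, hence adjacent in $(KB_m(G))^2$.

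For the reverse implication, consider two distinct bicliques $P,Q$ that are adjacent in $(KB_m(G))^2$. If they are mutually included, then writing $P = X_PY_P$, $Q = X_QY_Q$ with (w.l.o.g.) $X_Q \subset X_P$ and $Y_P \subset Y_Q$, any vertex of $X_Q$ witnesses $P \cap Q \neq \emptyset$ (equivalently, any vertex of $Y_P$ does). Otherwise there is a biclique $R$ mutually included with both $P$ and $Q$, and Lemma~\ref{lem.commonMutually} gives $P\cap Q \neq \emptyset$. Thus $PQ$ is an edge of $KB(G)$.

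The argument really is just a clean assembly of Corollary~\ref{cor.inter.mutuallyorR} (forward) and Lemma~\ref{lem.commonMutually} (reverse), together with the trivial observation that mutually included bicliques share vertices. The only subtle point to double-check is that ``adjacent in $(KB_m(G))^2$'' covers exactly distances $1$ and $2$ in $KB_m(G)$ (and not distance $0$), so that we are not obliged to produce a common vertex for $P=Q$; this is standard for the square operator, and since $P\ne Q$ in an edge of $KB(G)$ the two sides of the equivalence above are genuinely about distinct bicliques. No serious obstacle is expected.
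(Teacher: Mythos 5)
Your proof is correct and follows essentially the same route as the paper: Corollary~\ref{cor.inter.mutuallyorR} for intersecting bicliques and Lemma~\ref{lem.commonMutually} (in contrapositive form in the paper) for non-intersecting ones. You merely make explicit the small point, left implicit in the paper, that mutually included bicliques themselves share vertices.
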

\begin{proof}
  Let $G$ be a $K_3$-free graph and let $P$ and $Q$ be
  two bicliques.

  If  $P$ and  $Q$ intersect,  by Corollary~\ref{cor.inter.mutuallyorR},
  $P$ and $Q$ are mutually included or there is a biclique $R$ that is
  mutually included with  both. That is, $P$ and $Q$  are at distance at
  most $2$ in $KB_m(G)$.

  If  $P$ and  $Q$ do  not intersect,  by Lemma~\ref{lem.commonMutually}
  there is no biclique that is mutually included with both. That is, $P$
  and $Q$ are at distance at least $3$ in $KB_m(G)$.

  So $KB(G) = (KB_m(G))^2$. \qed
\end{proof}

\begin{corollary}
  $KB(K_3$-free$) \subsetneq (\mG)^2$.
\end{corollary}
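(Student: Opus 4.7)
The containment $KB(K_3$-free$) \subseteq (\mG)^2$ is immediate from Theorem~\ref{thm.KBm-square}: any $H = KB(G)$ with $G$ a $K_3$-free graph equals $(KB_m(G))^2 \in (\mG)^2$. My plan is therefore to produce a concrete graph $H \in (\mG)^2 \setminus KB(K_3$-free$)$ witnessing strictness.

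For the witness, I would take $H$ on vertex set $\{a, b, c, v_1, v_2, v_3\}$ with $\{a, b, c\}$ inducing a triangle, each $v_i$ adjacent to every vertex in $\{a, b, c\}$, and $\{v_1, v_2, v_3\}$ independent. To place $H$ in $(\mG)^2$, I exhibit it as the square of the \emph{net} graph $N$ (the triangle $abc$ together with the pendants $av_1$, $bv_2$, $cv_3$): a short distance computation in $N$ shows that the $v_i$'s are pairwise at distance $3$ (so non-adjacent in $N^2$), every $v_i$ is within distance $2$ of every $x \in \{a, b, c\}$, and the triangle $abc$ persists; hence $N^2 = H$.

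To verify $H \notin KB(K_3$-free$)$ I would invoke the $K_3$-free case of Conjecture~\ref{conj.montero}, which the introduction advertises as a consequence of Theorem~\ref{thm.KBm-square}: in $H$ the three vertices $v_1, v_2, v_3$ are twins with common neighborhood $\{a, b, c\}$, which itself induces a $K_3$, and $H$ is clearly not the diamond, contradicting the conjecture. The main obstacle is this reliance on the $K_3$-free case of Conjecture~\ref{conj.montero}; to avoid circularity if its proof is given later, one could instead argue directly via Theorem~\ref{thm.KBm-square}, observing that any candidate $KB_m(G) \subseteq H$ with $(KB_m(G))^2 = H$ must be compatible with the partial-order-like structure of mutual inclusion between bicliques, and a case analysis on the possible spanning subgraphs of $H$ would rule this out.
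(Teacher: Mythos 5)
Your proposal is correct, and your witness is in fact exactly the paper's witness: the graph $H$ you construct (a triangle $\{a,b,c\}$ completely joined to an independent set $\{v_1,v_2,v_3\}$) is precisely $\net^2$, and your distance computation confirming $N^2=H$ matches what the paper leaves implicit. The containment direction via Theorem~\ref{thm.KBm-square} is identical. Where you genuinely diverge is in certifying that $\net^2$ is not a biclique graph of a $K_3$-free graph: the paper simply cites Montero's work for the stronger fact that $\net^2\not\in KB(\mG)$ ``by inspection'' (a fact it reuses in the final remarks to get $(\mG)^2\not\subseteq KB(\mG)$), whereas you derive the weaker but sufficient statement $H\not\in KB(K_3\text{-free})$ internally from the $K_3$-free case of Conjecture~\ref{conj.montero} (Corollary~\ref{cor.conjectureGM}, or more directly Theorem~\ref{thm.independent} applied to $I=\{v_1,v_2,v_3\}$ with $|\bigcap N(v_i)|=3$ and $|\bigcup N(v_i)|=3\not>3$). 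This buys self-containedness at the cost of a forward reference; it is not circular, since Theorem~\ref{thm.independent} and Corollary~\ref{cor.conjectureGM} depend only on Theorem~\ref{thm.KBm-square} and the clique-ordering machinery, none of which uses the present corollary. Your fallback ``direct case analysis on spanning subgraphs'' is too vague to stand as a proof on its own, but you do not need it: the main line of your argument is sound provided the result is read after Section~3 (or the later results are cited explicitly as proved independently).
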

\begin{proof}
  By Theorem~\ref{thm.KBm-square}, $KB(K_3$-free$) \subseteq  (\mG)^2$.

  Observe that \net$^2  \in (\mG)^2$ but \net$^2  \not\in KB(\mG)$ (from
  the   work  of   Montero  \cite{Montero2008},   by  inspection).   See
  Figure~\ref{fig:netgraph}.
  \qed
\end{proof}

\begin{figure}[htb]
\centering
\begin{subfigure}{.33\textwidth}
  \centering
  \includegraphics[width=3cm]{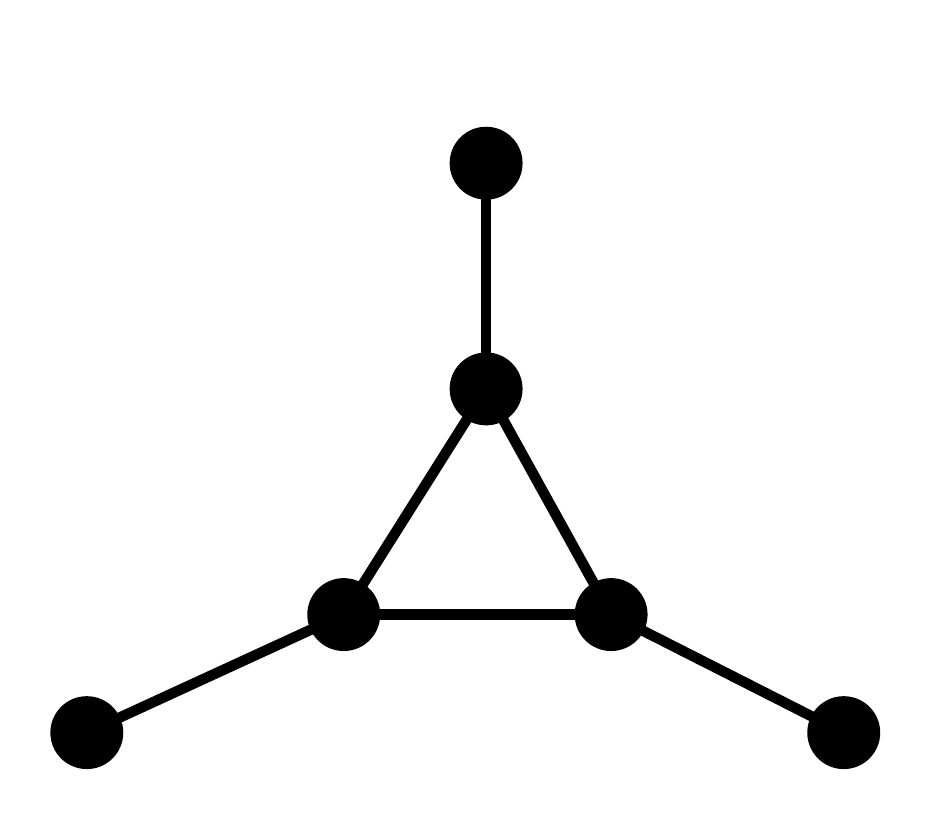}
  \caption{}
  \label{fig:netgraph}
\end{subfigure}%
\begin{subfigure}{.33\textwidth}
  \centering
  \includegraphics[width=3cm]{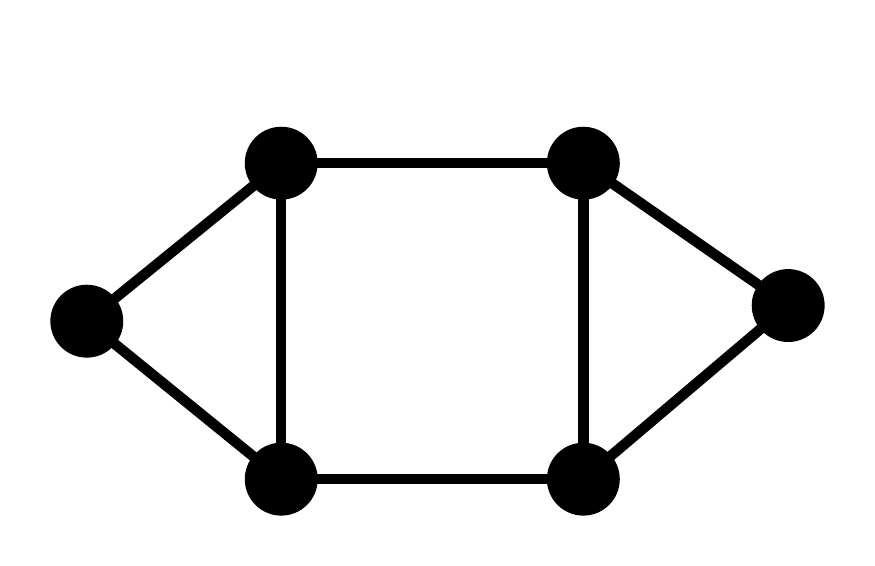}
  \caption{}
  \label{fig:co-domino}
\end{subfigure}%
\begin{subfigure}{.33\textwidth}
  \centering
  \includegraphics[width=2cm]{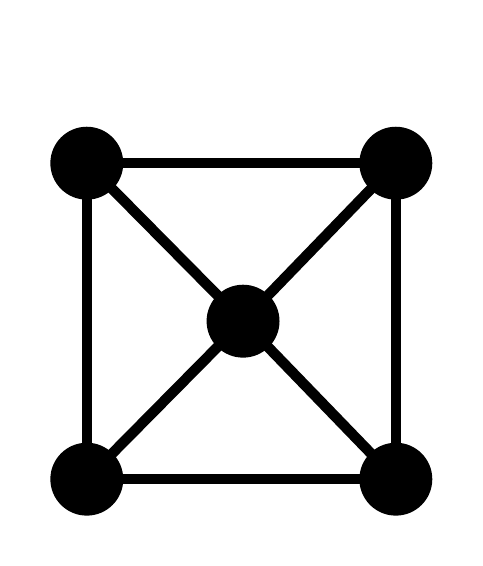}
  \caption{}
  \label{fig:4wheel}
\end{subfigure}
\caption{(a) \net\ graph (b) \codomino\ graph, the  complement of the
  \domino\ graph (c) $4$-\wheel\ graph}
\label{fig:test}
\end{figure}

In general, it is  not the case that the biclique graph  is the square of
some   graph.   For   instance,   consider  the   \codomino\  graph   of
Figure~\ref{fig:co-domino}.   The biclique  graph of  \codomino, is  the
$4$-\wheel\     graph     of    Figure~\ref{fig:4wheel},     that     is
$KB($\codomino$) = 4$-\wheel.  But $4$-\wheel\  is not the square of any
graph (by inspection).

\section{Properties}     

In this section we present properties of mutually included biclique graphs, 
square graphs, and therefore,
for biclique graphs of triangle-free graphs. 

\begin{lemma}\label{lem.mutually-subset}
  Let $P  = X_PY_P$ and  $Q = X_QY_Q$ be  two bicliques of  a $K_3$-free
  graph $G$ such  that $X_P \cap Y_Q  = Y_P \cap X_Q  = \emptyset$.  $P$
  and $Q$  are mutually  included iff  $X_P \subset  X_Q$ or
  $X_Q \subset X_P$.
\end{lemma}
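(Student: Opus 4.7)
The statement is an ``if and only if'', so I would split it into the two directions.

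The forward direction is immediate from the definition of mutually included: by definition, mutual inclusion says $X_Q \subset X_P$ and $Y_P \subset Y_Q$ (or symmetrically), so in particular one of $X_P$, $X_Q$ is strictly contained in the other. No $K_3$-freeness is needed here.

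For the backward direction, suppose without loss of generality that $X_P \subset X_Q$ (strict). I would like to deduce $Y_Q \subset Y_P$, which together with $X_P \subset X_Q$ says exactly that $P$ and $Q$ are mutually included (with the roles of $P$ and $Q$ swapped in the definition). The plan is to use the monotonicity of $N_G^*$ noted right after its definition: if $U \subseteq S$ then $N_G^*(S) \subseteq N_G^*(U)$. Applying this to $X_P \subset X_Q$ gives $N_G^*(X_Q) \subseteq N_G^*(X_P)$. Now invoke Observation~\ref{obs.Nstar}, which, because $G$ is $K_3$-free and $P$, $Q$ are bicliques, yields $N_G^*(X_P) = Y_P$ and $N_G^*(X_Q) = Y_Q$. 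Hence $Y_Q \subseteq Y_P$.

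To upgrade this to strict containment, I would argue by contradiction: if $Y_Q = Y_P$, then applying $N_G^*$ once more and using Observation~\ref{obs.Nstar} again gives $X_Q = N_G^*(Y_Q) = N_G^*(Y_P) = X_P$, contradicting $X_P \subsetneq X_Q$. Therefore $Y_Q \subsetneq Y_P$, i.e.\ $Y_Q \subset Y_P$, and $P$ and $Q$ are mutually included.

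The main obstacle is essentially nonexistent: the proof is a direct application of Observation~\ref{obs.Nstar} together with the elementary monotonicity of $N_G^*$. The only subtlety worth flagging is the assumption $X_P \cap Y_Q = Y_P \cap X_Q = \emptyset$, which is used implicitly in that Observation~\ref{obs.Nstar} is being applied to the specific bipartitions $(X_P, Y_P)$ and $(X_Q, Y_Q)$; without this disjointness condition the bipartition roles could be swapped and the equivalence could fail.
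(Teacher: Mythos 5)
Your proof is correct and follows essentially the same route as the paper's: the forward direction from the definition, and the backward direction via the monotonicity of $N_G^*$ together with Observation~\ref{obs.Nstar}. Your explicit contradiction argument for the strictness $Y_Q \subsetneq Y_P$ is in fact slightly more careful than the paper, which simply asserts $Y_P \neq Y_Q$.
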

\begin{proof}
  Let  $P  =  X_PY_P$ and  $Q  =  X_QY_Q$  be  two bicliques  such  that
  $X_P \cap Y_Q = Y_P \cap X_Q = \emptyset$.

  By   definition  if   $P$   and  $Q$   are   mutually  included   then
  $X_P \subset X_Q$ or $X_Q \subset X_P$.

  Now        suppose        $X_P       \subset        X_Q$.         Then
  $N_G^*(X_Q)   \subseteq   N_G^*(X_P)$,    by   the   definition.    By
  Observation~\ref{obs.Nstar},  $Y_P =  N_G^*(X_P)$,  $Y_Q  = N_G^*(X_Q)$  and
  $Y_P \neq Y_Q$.   So, $Y_Q \subset Y_P$. Consequently $P$  and $Q$ are
  mutually included.

  Changing roles of $P$  and $Q$, we conclude that $P$  and $Q$ are also
  mutually included in the case when $X_Q \subset X_P$.
  \qed
\end{proof}

Note that, for  any graph, the union of two  mutually included bicliques
induces a bipartite  graph and a set of mutually  included bicliques are
``nested''.
In  Figure~\ref{fig:mutuallynested} are  presented a  set of
``nested'' mutually included bicliques.

\begin{figure}[htb]
  \centering
  \includegraphics[width=4cm]{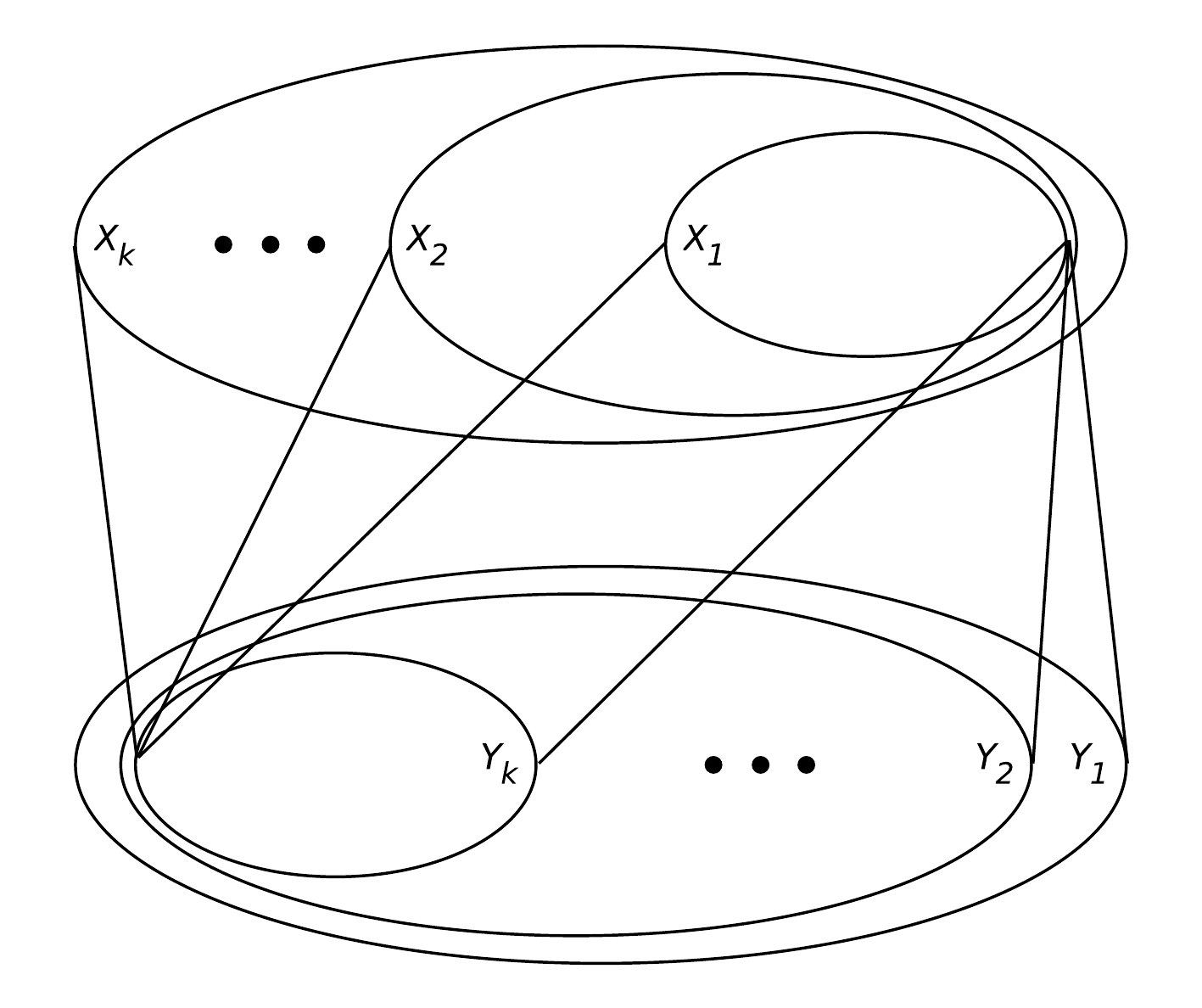} 
  \caption{Nested mutually included bicliques.}
  \label{fig:mutuallynested}
\end{figure}

\begin{lemma}\label{lem.nested}
  For any graph $G$,  and a set $C$ of bicliques of  $G$ such that every
  two of them  are mutually included, then there is  an ordering of $C$,
  $(P_1, \ldots, P_k)$  such that $X_1 \subset \cdots  \subset X_k$ (and
  $Y_k \subset  \cdots \subset Y_1$),  assuming that $P_i=  X_iY_i$, for
  $1 \leq i \leq k$.
\end{lemma}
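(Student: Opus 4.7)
The plan is to show that the mutual-inclusion relation on $C$ forces the $X$-parts to form a chain under strict set inclusion, from which the desired ordering follows by sorting.

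First I would fix a consistent labeling of the bipartitions. Recall that whenever two bicliques $P$ and $Q$ are mutually included, we assumed $X_P \cap Y_Q = X_Q \cap Y_P = \emptyset$ and one of $X_P \subset X_Q$ or $X_Q \subset X_P$ holds. Pick an arbitrary $P_0 \in C$ and call its parts $X_{P_0}, Y_{P_0}$. For every other biclique $Q \in C$, the mutual inclusion with $P_0$ already designates which part of $Q$ plays the role of $X_Q$ (the one that is comparable with $X_{P_0}$ under $\subset$) and which plays the role of $Y_Q$. This gives a globally consistent choice of bipartition labels across all of $C$; transitivity of $\subset$ ensures that for any other pair $P, Q \in C$ the corresponding labels are the same as those induced by the mutual inclusion between $P$ and $Q$.

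Second, with this labeling, any two bicliques $P, Q \in C$ satisfy $X_P \subsetneq X_Q$ or $X_Q \subsetneq X_P$ (strictly, since $P \neq Q$ implies the corresponding $X$-parts cannot coincide — if $X_P = X_Q$ then by Observation~\ref{obs.Nstar} $Y_P = N_G^*(X_P) = N_G^*(X_Q) = Y_Q$, contradicting $P \neq Q$). Hence the family $\{X_P : P \in C\}$ is totally ordered by strict inclusion. A finite totally ordered family admits a linear enumeration, so write $C = (P_1, \ldots, P_k)$ with $X_1 \subsetneq X_2 \subsetneq \cdots \subsetneq X_k$.

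Finally, by the definition of mutual inclusion, $X_i \subsetneq X_j$ forces $Y_j \subsetneq Y_i$; thus the same enumeration simultaneously yields $Y_k \subsetneq \cdots \subsetneq Y_1$, completing the proof. The only subtle point — and the main thing to verify — is the global consistency of the bipartition labeling; this is not really an obstacle but is worth pointing out, since the symbol ``$X_i$'' is only well-defined after such a choice.
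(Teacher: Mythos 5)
Your proof is correct and follows essentially the same route as the paper's: the $X$-parts of the bicliques in $C$ are pairwise comparable under inclusion, hence form a chain that can be sorted, with the $Y$-parts reversing automatically; your extra care about fixing a globally consistent $X$/$Y$ labeling is a reasonable elaboration of a point the paper leaves implicit. One small caveat: your appeal to Observation~\ref{obs.Nstar} to get strictness of the inclusions is out of scope here (that observation is stated only for $K_3$-free graphs, whereas this lemma is for arbitrary $G$), but it is also unnecessary, since the definition of mutual inclusion already asserts strict containment of the corresponding parts.
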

\begin{proof}
  Let $G$ be  any graph and $C$ be  a set of bicliques of  $G$ such that
  every two  of them  are mutually included.   Recall that  the relation
  $\subset$ is  a partial order and  note that the bicliques  of $C$ has
  parts  $X_1$,  $\ldots$, $X_k$  that  are  all comparable  under  that
  partial order.  So there is an  ordering of $C$, $(P_1,  \ldots, P_k)$
  such    that     $X_1    \subset    \cdots    \subset     X_k$    (and
  $Y_k \subset  \cdots \subset Y_1$),  assuming that $P_i=  X_iY_i$, for
  $1 \leq i \leq k$.
  \qed
\end{proof}

Any other  biclique $Q = X_QY_Q$  that is mutually included  with $P_i$,
for some  $1 \leq i  \leq k$ is also  mutually included with  $P_j$, for
every $1 \leq j \leq i-1$ or every $i+1 \leq j \leq k$.

\begin{theorem}\label{teo.KBm-clique-order}
  For every graph $G$ and for every clique $C$ of $H = KB_m(G)$ there is
  an  ordering  of the  vertices  of  $C$  such  that for  every  vertex
  $Q \in V(H)$, with $N_{H}(Q) \cap C \neq \emptyset$, the vertices
  of $N_{H}(Q) \cap C$ are consecutive in that ordering and include the
  first or the last vertex of that ordering.
\end{theorem}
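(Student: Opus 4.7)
The plan is to use the ordering of $C$ provided by Lemma~\ref{lem.nested} and then, by a case analysis on the direction of mutual inclusion, to describe exactly which $P_j$ are mutually included with an arbitrary $Q \in V(H)$. First, order $C$ as $(P_1, \ldots, P_k)$ with $X_{P_1} \subsetneq \cdots \subsetneq X_{P_k}$ (and dually $Y_{P_k} \subsetneq \cdots \subsetneq Y_{P_1}$). Fix $Q = X_Q Y_Q$ with $N_H(Q) \cap C \neq \emptyset$ and choose any $P_i \in C$ mutually included with $Q$; by the definition of mutual inclusion, either (A) $X_Q \subsetneq X_{P_i}$ and $Y_{P_i} \subsetneq Y_Q$, or (B) $X_{P_i} \subsetneq X_Q$ and $Y_Q \subsetneq Y_{P_i}$.

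The key technical step is a propagation claim: in case (A), $Q$ is mutually included with every $P_j$ for $j \geq i$. The strict inclusions $X_Q \subsetneq X_{P_j}$ and $Y_{P_j} \subsetneq Y_Q$ are immediate from the nested chain. The disjointness $X_Q \cap Y_{P_j} = \emptyset$ follows from $Y_{P_j} \subseteq Y_{P_i}$ together with $X_Q \cap Y_{P_i} = \emptyset$ (the latter coming from the mutual inclusion $Q \sim P_i$). The more subtle condition is $X_{P_j} \cap Y_Q = \emptyset$: if some $v$ lay in this intersection, then $v \in X_{P_j}$ would make $v$ adjacent to every vertex of the nonempty set $Y_{P_j} \subseteq Y_Q$, contradicting the fact that $Y_Q$ is an independent set (since $Q$ is a biclique). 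Case (B) is symmetric and gives mutual inclusion with every $P_j$ for $j \leq i$.

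Combining the two cases, $N_H(Q) \cap C$ contains $P_k$ whenever case (A) holds for some $P_i$ and contains $P_1$ whenever case (B) holds for some $P_i$; either way the neighbourhood includes the first or the last vertex of the ordering. The main obstacle is then to strengthen this to the \emph{consecutive} condition. I would argue that the indices of $C$ missing from $I_Q = \{j : Q \sim P_j\}$ form a single contiguous block in the nested ordering (precisely those $j$ for which $X_{P_j}$ is incomparable with $X_Q$), and then, if necessary, rearrange $C$ to push any such middle block to one extremity of the ordering, so that $N_H(Q) \cap C$ becomes a prefix or a suffix. The most delicate part will be verifying that a single rearrangement can be chosen to work uniformly for every $Q \in V(H)$ at once.
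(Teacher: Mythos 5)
Your proposal follows the same route as the paper up to the decisive point: the nested ordering of $C$ from Lemma~\ref{lem.nested}, the propagation of mutual inclusion up the chain in your case (A) and down the chain in case (B), and the conclusion that $N_H(Q)\cap C$ contains the first or the last vertex. You are in fact more careful than the paper on the propagation step: the paper passes from $X_Q\subset X_i$ directly to ``$Q$ is mutually included with $P_i$'' without checking the $Y$-containments or the disjointness conditions, which you do verify.

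The genuine gap is the one you flag yourself: consecutiveness. The paper's proof presents the two cases ($X_Q\subset X_\ell$ with $\ell$ minimum, giving the suffix $P_\ell,\dots,P_k$; and $X_r\subset X_Q$ with $r$ maximum, giving the prefix $P_1,\dots,P_r$) and simply stops, implicitly treating them as exclusive, i.e.\ assuming $N_H(Q)\cap C$ is a prefix or a suffix of the nested ordering. Your worry that both cases can co-occur and leave a gap is justified: in the bipartite graph with parts $\{a_1,a_2,a_3\}$ and $\{b_{12},b_{13},b_{123}\}$, where $b_S$ is adjacent to $a_i$ iff $i\in S$, the bicliques are $P_1=\{a_1\}\{b_{12},b_{13},b_{123}\}$, $P_2=\{a_1,a_2\}\{b_{12},b_{123}\}$, $P_3=\{a_1,a_2,a_3\}\{b_{123}\}$ and $Q=\{a_1,a_3\}\{b_{13},b_{123}\}$; the set $C=\{P_1,P_2,P_3\}$ is a clique of $KB_m$, its nested ordering is $(P_1,P_2,P_3)$, and $N_H(Q)\cap C=\{P_1,P_3\}$ is not consecutive in it. So the nested ordering alone does not give the stated conclusion, and your proposed repair --- pushing the ``missing middle block'' to one end --- is precisely the step that still needs a proof: the ordering must be fixed once and work for every $Q$ simultaneously, and a permutation that fixes one $Q$ can break another (in the example, moving $P_2$ to an end repairs $Q$ but forces re-checking all other vertices, and it also destroys the nesting your propagation argument relies on). Neither your argument nor the paper's closes this step; what both do establish is the weaker fact that $N_H(Q)\cap C$ always contains $P_1$ or $P_k$ of the nested ordering, which is all that Corollary~\ref{cor.netstarfree} actually uses.
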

\begin{proof}
  Let $G$ be a graph and $C$ a  clique of $H = KB_m(G)$. That is, $C$ is
  a set  of bicliques of  $G$ such that every  two of them  are mutually
  included.    By   Lemma~\ref{lem.nested},   there   is   an   ordering
  $(P_1, \ldots, P_k)$ of the bicliques  of $C$.  Let $Q \in V(H)$, such
  that $N_{H}(Q) \cap C \neq \emptyset$.   Suppose that $Q = X_QY_Q$ and
  $P_i= X_iY_i$, for $1 \leq i \leq k$.

  Suppose that $X_Q \subset X_{\ell}$ for  some $1 \leq \ell \leq k$ and
  that   $\ell$    is   minimum.    Then
  $X_Q \subset X_i$   for every $\ell \leq i \leq k$.
  That    is,   $Q$    is    mutually   included    with   $P_i$,    for
  $\ell \leq i \leq k$.

  Now suppose that $X_r \subset X_Q$ for some $1 \leq r \leq k$ and that
  $r$ is  maximum.  By  the same argument,  $Q$ is
  mutually included with $P_i$, for $1 \leq i \leq r$.
  \qed
\end{proof}

Using Theorem~\ref{teo.KBm-clique-order} can be proved that some structures
do not occur in $KB_m(G)$. 

Let $T_{\net} = \{x_1, x_2, x_3\}$ be the vertices of the triangle of the \net\ graph 
(Figure~\ref{fig:netgraph}) and let $S_{\net} = \{s_1, s_2, s_3\}$ be the other vertices 
such that each $x_is_i$ is an edge.
Let $\net^*$ be any graph generated by a \net\ graph with 
0  or more  edges  added  connecting only  vertices  of $S_{\net}$.  See
Figure~\ref{fig:netstar}.

\begin{figure}[htb]
  \centering
  \includegraphics[width=3cm]{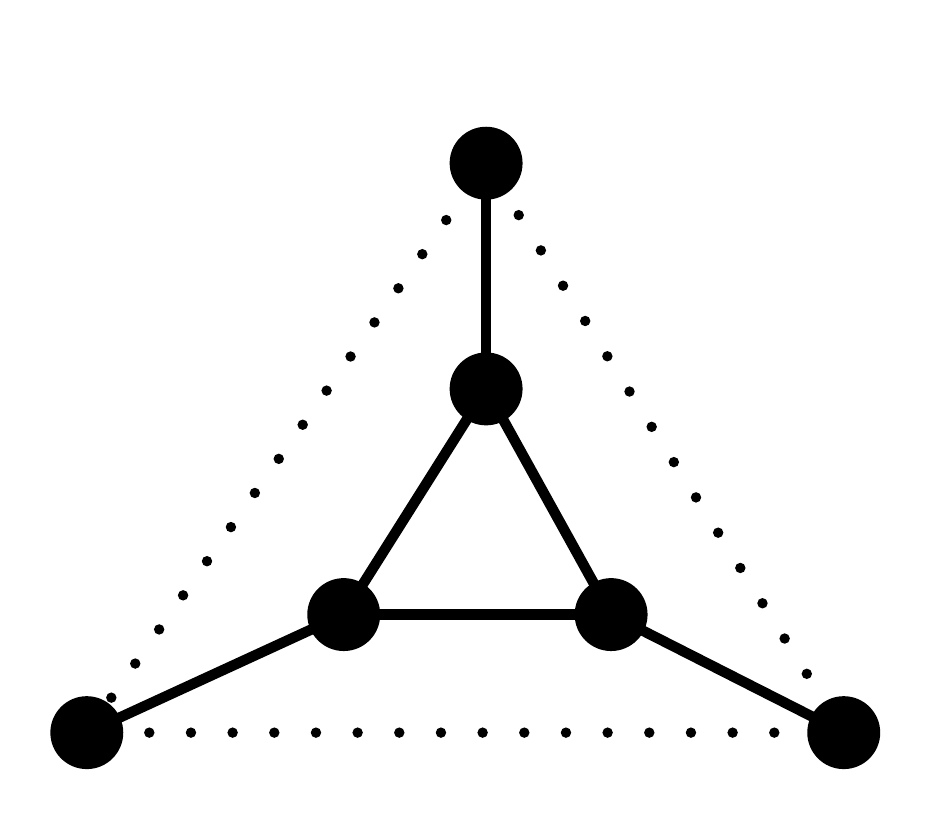}
  \caption{$\net^*$ structure. The dotted edges are optional.}
  \label{fig:netstar}
\end{figure}

\begin{corollary}\label{cor.netstarfree}
  For every graph $G$, $KB_m(G)$ does not contain a $\net^*$ as an induced subgraph.
\end{corollary}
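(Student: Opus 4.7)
The plan is to apply Theorem~\ref{teo.KBm-clique-order} to the triangle of the putative $\net^*$. Suppose, for contradiction, that $H = KB_m(G)$ contains an induced $\net^*$ on vertices $\{x_1, x_2, x_3\} \cup \{s_1, s_2, s_3\}$. Then $T = \{x_1, x_2, x_3\}$ is a clique (in fact, a triangle) of $H$, so by Theorem~\ref{teo.KBm-clique-order} there is an ordering $(x_{i_1}, x_{i_2}, x_{i_3})$ of the vertices of $T$ with the consecutivity property: for every $Q \in V(H)$ with $N_H(Q) \cap T \neq \emptyset$, the set $N_H(Q) \cap T$ is a consecutive interval in that ordering and contains the first or the last vertex.

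Now I would focus on the pendant attached to the middle vertex of the ordering, namely $s_{i_2}$. By the definition of $\net^*$, the only edges between the pendants $\{s_1, s_2, s_3\}$ and the triangle $\{x_1, x_2, x_3\}$ are the three edges $x_j s_j$, so since we assumed the $\net^*$ is \emph{induced}, we have $N_H(s_{i_2}) \cap T = \{x_{i_2}\}$. This set is nonempty but consists of a single middle element of the ordering, hence it does not contain the first vertex $x_{i_1}$ nor the last vertex $x_{i_3}$, directly contradicting Theorem~\ref{teo.KBm-clique-order}.

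The entire argument reduces to this observation, so there is no real obstacle; the main thing to be careful about is to spell out that the optional edges among $s_1, s_2, s_3$ in $\net^*$ are irrelevant because the theorem is only applied to $N_H(s_{i_2}) \cap T$, and that the induced condition is what guarantees $s_{i_2}$ has no extra neighbours in $T$. This proof also illustrates why the corollary is stated for $KB_m(G)$ rather than $KB(G)$: the consecutivity property in Theorem~\ref{teo.KBm-clique-order} is specific to mutual inclusion and need not survive under squaring.
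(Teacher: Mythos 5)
Your proof is correct and follows essentially the same route as the paper: both apply Theorem~\ref{teo.KBm-clique-order} to the triangle of the $\net^*$ and derive a contradiction from the pendant whose unique neighbour in the triangle lands in the middle of the ordering. Your version is a slight streamlining (you go straight to the middle vertex of whatever ordering the theorem provides, rather than first using $s_1$ and $s_3$ to pin down the endpoints), but the key idea is identical.
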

\begin{proof}
  Let $G$ be a graph and $H = KB_m(G)$. Suppose there is a $\net^*$ graph 
  as an induced subgraph of $H$.
  As each vertex of $S_{\net}$ is adjacent to only one vertex of $T_{\net}$, 
  and $T_{\net}$ is a triangle,
  by   Theorem~\ref{teo.KBm-clique-order}  there   is  an   ordering  of
  $T_{\net}$.  Considering the neighbourhoods  of $s_1$ and $s_3$, $x_1$
  and $x_3$ are the first and the last (or the inverse) of that ordering. 
  Also  by Theorem~\ref{teo.KBm-clique-order},  the neighbourhood
  of $s_2$ should include $x_1$ or $x_3$, but $N_H(s_2) = x_2$.
  Therefore $H$ does not contain a $\net^*$ as an induced subgraph.
  \qed
\end{proof}

Now  we present  a  property about  square graphs  that  implies in  a
property of biclique graphs of triangle-free graphs.

Let $D_n$ be a graph with vertex set
$V(D_n) = \{v, u_1, \ldots, u_n, w_1, \ldots, w_n\}$ and edge set
$E(D_n) = \{ vu_i \mid 1 \leq i \leq n\} \cup \{ u_iw_i \mid
1 \leq i \leq n\}$.  Let $D^-_n$ be the graph $D_n - w_n$.
See Figures~\ref{fig:Dn} and~\ref{fig:Dn-}. 

\begin{figure}[htb]
\centering
\begin{subfigure}{.5\textwidth}
  \centering
  \includegraphics[width=3cm]{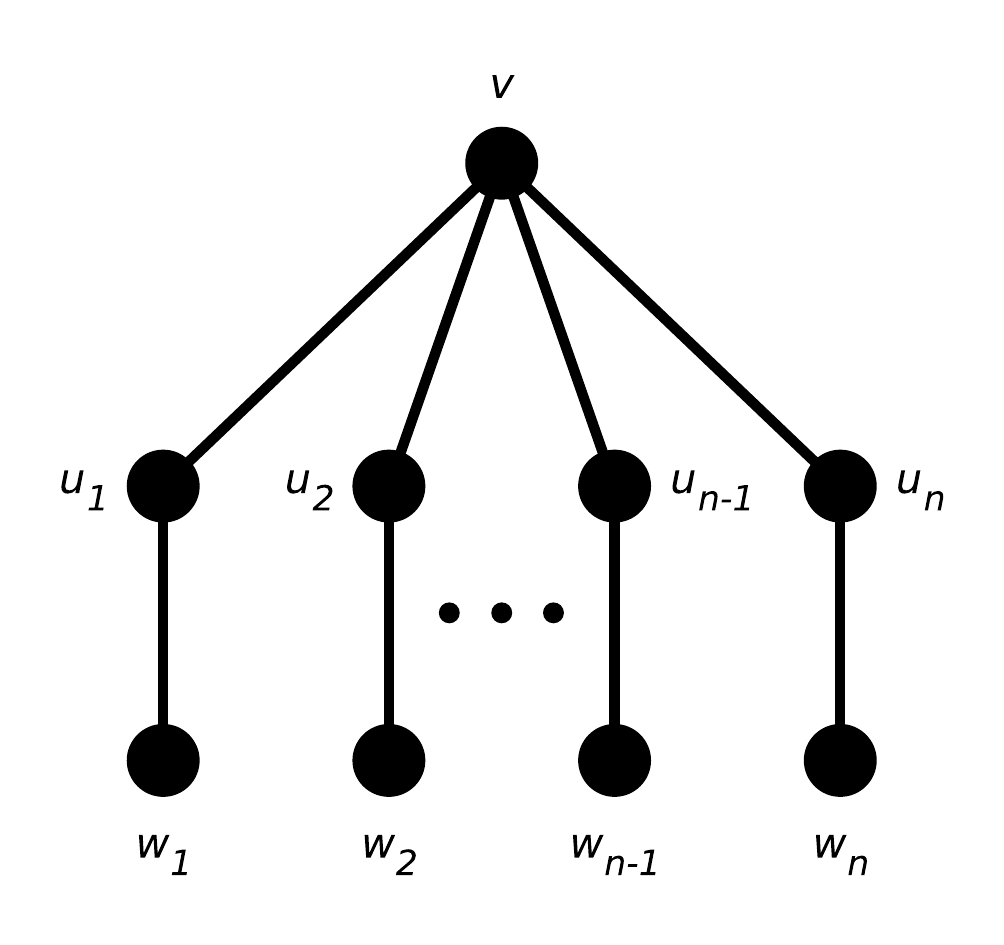}
  \caption{}
  \label{fig:Dn}
\end{subfigure}%
\begin{subfigure}{.5\textwidth}
  \centering
  \includegraphics[width=3cm]{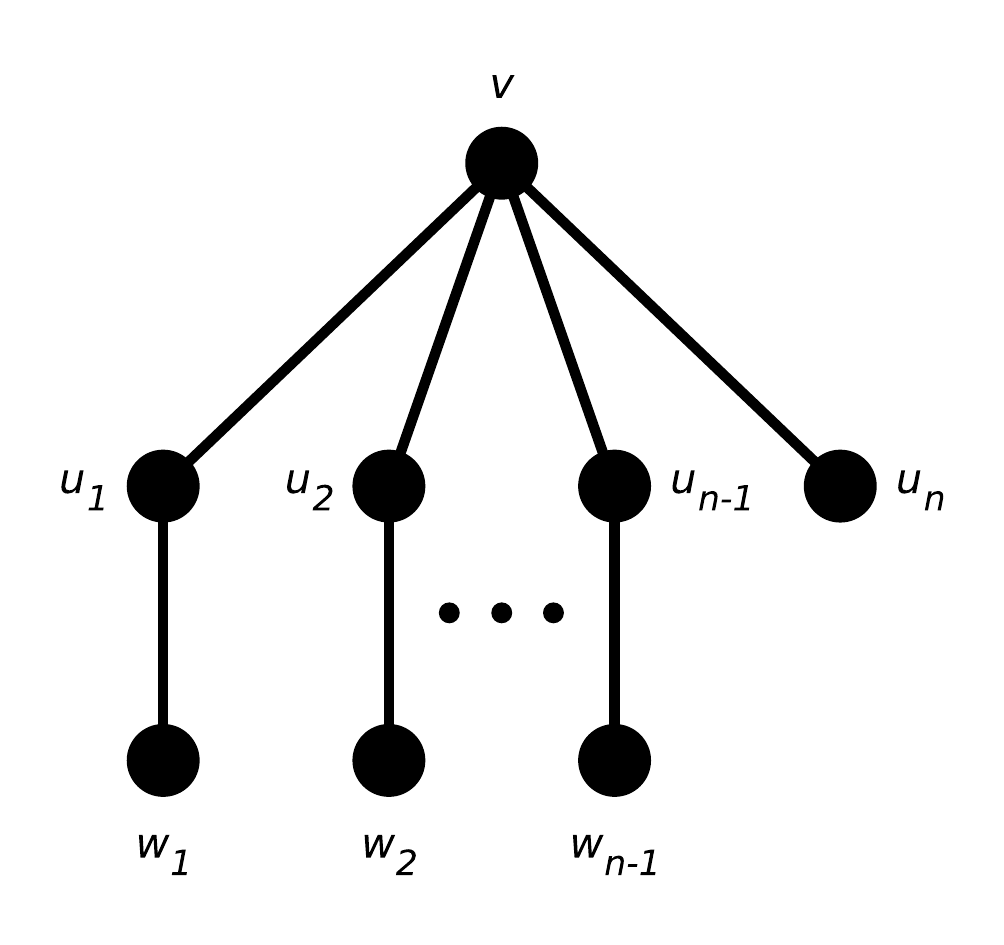}
  \caption{}
  \label{fig:Dn-}
\end{subfigure}%
\caption{(a) $D_n$ and (b) $D^-$ graphs.}
\end{figure}

Call ``old'' edges the edges of $G^2$ (for some graph $G$) that are
also edges of $G$ and call the other edges of $G^2$ as ``new'' edges.
That is, ``new'' edges are the edges that are in $G^2$ and do not
exist in  $G$.

\begin{theorem}\label{thm.Dnsquare}
  Let $H=G^2$ for some graph $G$. Then, every induced $K_{1,n}$ of $H$ is contained in a $(D_n)^2$ or a $(D^-_ n)^2$.
\end{theorem}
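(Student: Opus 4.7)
The plan is to translate the induced $K_{1,n}$ in $H = G^2$ back into $G$ and build a copy of $D_n$ (or $D^-_n$) inside $G$ whose square houses the star. Write $c$ for the center and $\ell_1,\ldots,\ell_n$ for the leaves. Each edge $c\ell_i \in E(H)$ means $d_G(c,\ell_i) \leq 2$, so either $c\ell_i \in E(G)$ or there is an intermediate $x_i \in V(G)$ with $cx_i, x_i\ell_i \in E(G)$; each non-edge $\ell_i\ell_j$ of $H$ forces $d_G(\ell_i,\ell_j) \geq 3$. The first observation I would record is that at most one leaf can be adjacent to $c$ in $G$, because two such leaves would make $\ell_i - c - \ell_j$ a length-$2$ path in $G$, putting $\ell_i\ell_j$ into $H$ and contradicting the induced star. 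This splits the argument into two cases.

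In the first case, every leaf satisfies $d_G(c,\ell_i) = 2$, and I fix one intermediate vertex $x_i$ per leaf. The forbidden $H$-edges between the leaves force $x_i \neq x_j$ for $i \neq j$ (else $\ell_i - x_i - \ell_j$ is a $P_3$ in $G$) and also $x_i \neq \ell_j$ for every $j$ (the case $j=i$ would put $c\ell_i$ in $E(G)$, and $j\neq i$ would put $\ell_j\ell_i$ in $E(G)$). The $2n+1$ vertices $\{c, x_1,\ldots,x_n,\ell_1,\ldots,\ell_n\}$ together with the $G$-edges $cx_i$ and $x_i\ell_i$ contain a copy of $D_n$ as a subgraph, so every edge of $(D_n)^2$ is present in $H$ on this vertex set, and the star edges $c\ell_i$ correspond to the canonical $vw_i$ edges of $(D_n)^2$.

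The second case is when a single leaf, say $\ell_n$, satisfies $c\ell_n \in E(G)$. The remaining leaves $\ell_1,\ldots,\ell_{n-1}$ are at distance exactly $2$ from $c$, so I choose intermediates $x_1,\ldots,x_{n-1}$ as before, arguing distinctness by the same forbidden $P_3$'s (note in particular that $x_i = \ell_n$ would yield $\ell_n\ell_i \in E(G)$). Letting $\ell_n$ play the role of the orphan $u_n$ and each $x_i$ the role of $u_i$ in $D^-_n$, the edges $cx_i$, $x_i\ell_i$ (for $i<n$) and $c\ell_n$ embed $D^-_n$ into $G$, so $(D^-_n)^2$ appears as a subgraph of $H$ on these $2n$ vertices and carries the $K_{1,n}$ inside its canonical star (with $c\ell_n$ now realized as an old edge $vu_n$).

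The main potential pitfall is keeping the distinctness bookkeeping between the intermediates and the leaves (across both cases) airtight; once those small checks are in place, the rest is a direct translation between distances in $G$ and adjacencies in $G^2$. I would also flag that ``contained in'' should be read as subgraph containment in $H$ rather than induced-subgraph identification, since a spurious $H$-edge $x_i\ell_j$ for $i \neq j$ (arising from a length-$2$ detour in $G$) may sit on top of the $(D_n)^2$ without disturbing the embedding of the star.
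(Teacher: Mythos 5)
Your proof is correct and follows essentially the same route as the paper's: both observe that at most one star edge can be an ``old'' edge of $G$, pick one distinct intermediate vertex per ``new'' edge, and split into the $D_n$ and $D^-_n$ cases accordingly. Your extra bookkeeping (checking $x_i \neq \ell_j$ and noting that the containment is as a subgraph rather than an induced subgraph) is a welcome tightening of details the paper passes over, but it does not change the argument.
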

\begin{proof}
  Let  $H  = G^2$ for some graph $G$.   
  Suppose $S= \{v, w_1,\cdots, w_n\}$ induces a $K_{1,n}$ in $H$. 
  Observe first that $H[S]$ can contain at most one old
  edge. 
  If $e_i=vw_i$ is a new edge, then there  is a vertex  $v_{e_i}$ of
  $G$ that  is adjacent in  $G$ to $v$ and to $w_i$. Observe that $v_{e_i}$ can not be adjacent to any $w_j$, $j \neq i$ in $G$. Then, for every new edge $e_k=vw_k$ of $H[S]$ there is in $G$ a (different) vertex $v_{e_k}$ adjacent to $v$ and $w_k$. 
  
  If all edges of $H[S]$  are new edges, vertices $v, v_{e_1}, \ldots, v_{e_n}$, $w_1, \ldots$, $w_n$ induce 
 a $D_n$ in $G$. Consequently $H[S]$ is contained in $(D_n)^2$.
 
 Otherwise, w.l.o.g. suppose $e_n$ is an old edge of $H[S]$. Then,  $G[\{v, v_{e_1}, \cdots$, $v_{e_{n-1}}$, $w_1, \cdots, w_n\}]$ is isomorphic to $(D^-_ n)$ and $H[S]$ 
 is included in $(D^-_ n)^2$.
  \qed
\end{proof}

See in Figures~\ref{fig:Dn2} and~\ref{fig:Dn-2} how a $K_{1,n}$ is included in a $(D_n)^2$ or a $(D^-_n)^2$.

\begin{figure}[htb]
\centering
\begin{subfigure}{.5\textwidth}
  \centering
  \includegraphics[width=4cm]{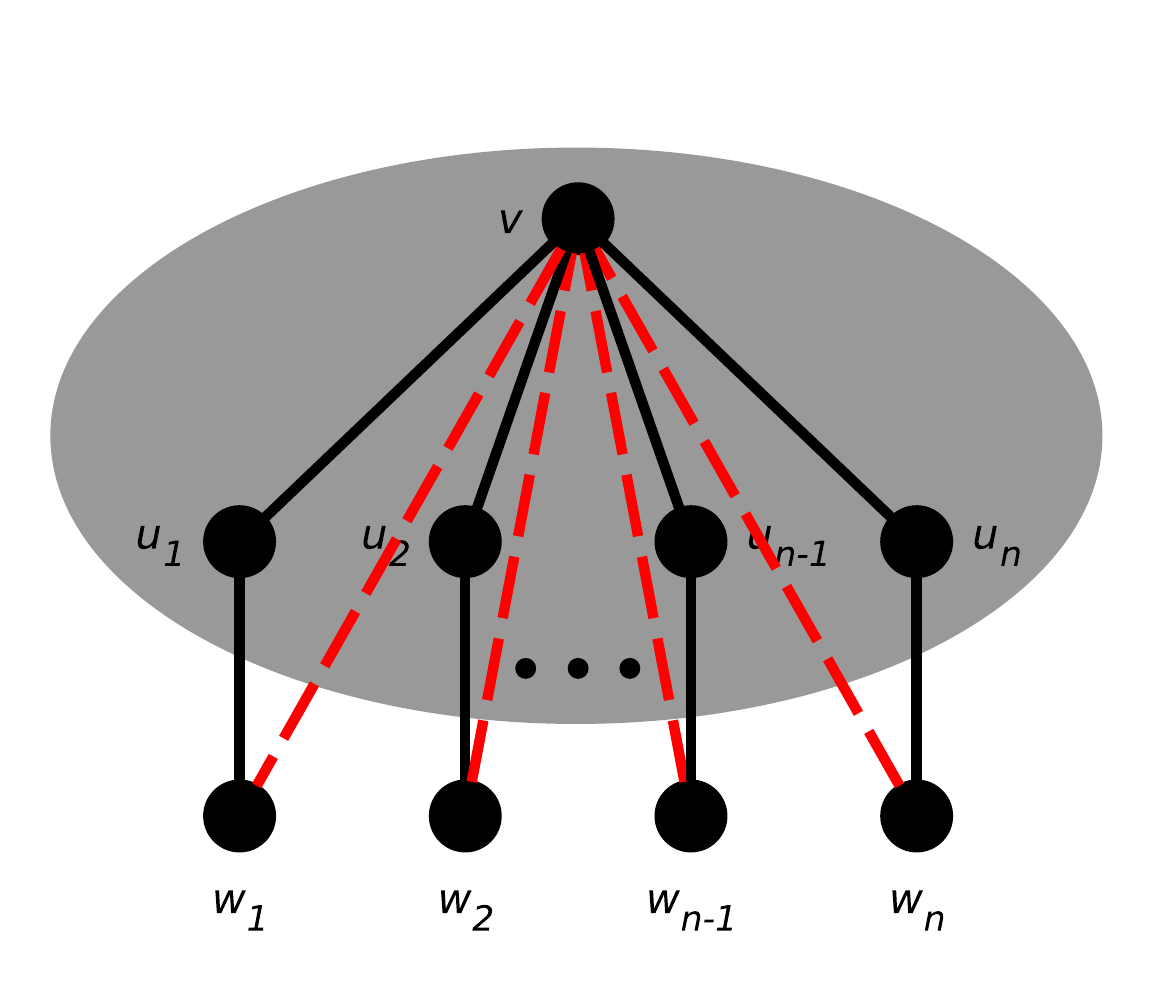}
  \caption{}
  \label{fig:Dn2}
\end{subfigure}%
\begin{subfigure}{.5\textwidth}
  \centering
  \includegraphics[width=4cm]{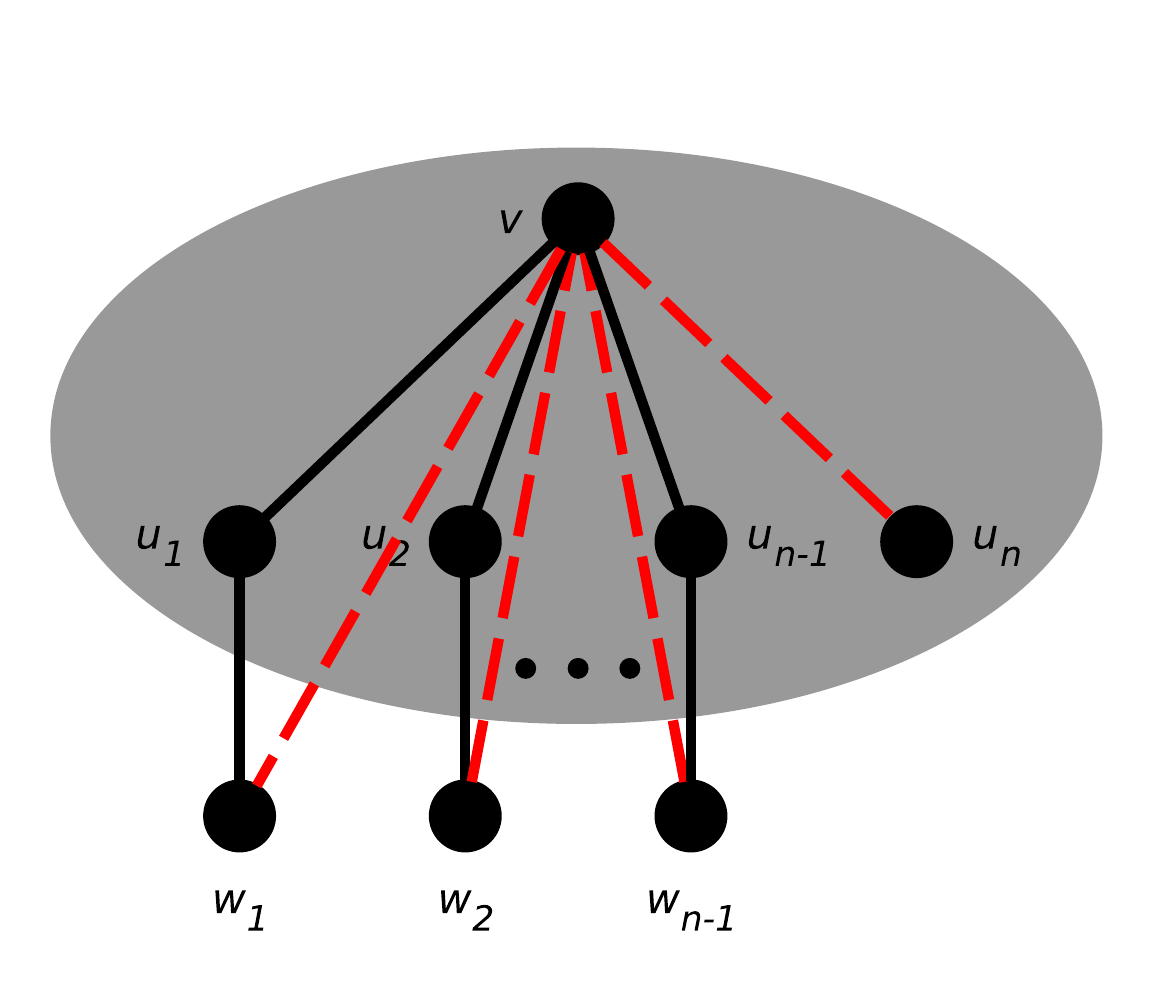}
  \caption{}
  \label{fig:Dn-2}
\end{subfigure}%
\caption{(a) $(D_n)^2$ and (b) $(D^-_n)^2$ graphs. The vertices in the grey area form a clique ($\{v,u_1, \ldots, u_n\}$). The edges of the $K_{1,n}$ are marked in dashed lines.}
\end{figure}

That property (of biclique graphs of triangle-free graphs) is a generalization 
of a very used property involving $P_3$ in
biclique graphs. Also, we think that this new property can be a useful tool
for solving the biclique graph recognition problem.

\begin{corollary}\label{cor.DnKB}
  Every induced $K_{1,n}$ of a biclique  graph of triangle-free graph is
  contained in a $(D_n)^2$ or a $(D^-_ n)^2$.
\end{corollary}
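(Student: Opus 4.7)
The plan is to observe that this corollary is an almost immediate consequence of combining the two main results already established in the section. First I would invoke Theorem~\ref{thm.KBm-square}, which tells us that for any $K_3$-free graph $G$, the biclique graph $KB(G)$ is exactly $(KB_m(G))^2$. Hence any biclique graph $H$ of a triangle-free graph is expressible as the square of some graph, namely $KB_m(G)$.

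Next I would apply Theorem~\ref{thm.Dnsquare}, which states that whenever $H = G'^2$ for some graph $G'$, every induced $K_{1,n}$ of $H$ is contained in a $(D_n)^2$ or $(D^-_n)^2$. Taking $G' = KB_m(G)$ gives the desired conclusion.

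There is essentially no obstacle here beyond matching up the hypotheses: Theorem~\ref{thm.KBm-square} requires $G$ to be $K_3$-free, which is exactly the assumption of the corollary, and Theorem~\ref{thm.Dnsquare} requires only that $H$ be the square of some graph, a condition handed to us by Theorem~\ref{thm.KBm-square}. So the proof is a two-line chain of implications, and I would write it simply as: let $H = KB(G)$ with $G$ triangle-free; by Theorem~\ref{thm.KBm-square}, $H = (KB_m(G))^2$; by Theorem~\ref{thm.Dnsquare} applied to $KB_m(G)$, every induced $K_{1,n}$ of $H$ is contained in a $(D_n)^2$ or a $(D^-_n)^2$, which completes the proof.
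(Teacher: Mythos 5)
Your proof is correct and is exactly the intended argument: the paper states this as an immediate corollary of Theorem~\ref{thm.KBm-square} (which exhibits $KB(G)$ as $(KB_m(G))^2$) combined with Theorem~\ref{thm.Dnsquare}. Nothing is missing.
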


Note that a $P_3$ is a $K_{1,2}$, $(D_2)^2$ is a $3$-fan and $(D^-_2)^2$
is a diamond. Then, the fact that  every $P_3$ is contained in a diamond
or  a  $3$-fan  \cite{Groshaus2010}, when restricted to biclique graphs of triangle-free graphs,
is  a particular case of Corollary~\ref{cor.DnKB}.


We present the following result which generalizes the Conjecture~\ref{conj.montero} when restrict to biclique graph of triangle-free graphs.

\begin{theorem}\label{thm.independent}
  Let $G$ be a biclique graph of a triangle-free graph.  If there exists
  an  independent set  $I$ in  $G$  with $|I|  =  n \geq  3$, such  that
  $|\bigcap_{v      \in     I}      N_G(v)|      \geq     3$.       Then
  $|\bigcup_{v \in I} N_G(v)| > n$.
\end{theorem}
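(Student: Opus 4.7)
The plan is to move to the auxiliary graph $H = KB_m(G')$, where $G'$ is the triangle-free graph with $G = KB(G')$, so that $G = H^2$ by Theorem~\ref{thm.KBm-square}. Write $I = \{v_1, \ldots, v_n\}$, $C = \bigcap_{v\in I} N_G(v)$ (with $|C| \geq 3$), and $U = \bigcup_{v \in I} N_G(v)$. Assume for contradiction that $|U| \leq n$, and aim to exhibit an induced $\net$ in $H$, contradicting Corollary~\ref{cor.netstarfree}.

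The key local observation is that no single vertex of $H$ can be $H$-adjacent to two distinct members of $I$: otherwise those two would share a common $H$-neighbour and so be adjacent in $H^2 = G$, violating independence of $I$. Consequently, for every pair $(c, v_i)$ with $c \in C$ the edge $c v_i$ of $H^2$ is realised either directly in $H$ or via some intermediate $u \in V(H)$ adjacent in $H$ to both $c$ and $v_i$; any such intermediate belongs to $U$ (it is $H$-adjacent, hence $H^2$-adjacent, to $v_i \in I$), differs from $c$, and for fixed $c$ the intermediates for different $v_i$'s are pairwise distinct (again by the common-neighbour argument). Moreover, each $c \in C$ has at most one direct $v_i$. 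Thus, if $d_c \in \{0,1\}$ is the number of directs for $c$, then the $n - d_c$ intermediates from $c$ all lie in $U \setminus \{c\}$, forcing $n - d_c \leq |U| - 1 \leq n - 1$; hence $d_c = 1$ for every $c \in C$ and $|U| = n$.

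Since $|U \setminus \{c\}| = n - 1$ equals the number of intermediates from $c$, every $u \in U \setminus \{c\}$ is actually an intermediate from $c$ for some $v_j$, so $u$ is $H$-adjacent to $c$ and to a unique $v_j \in I$. Varying $c$ over $C$ (using $|C| \geq 3$) shows that the map $\phi : U \to I$ sending each $u$ to its unique $H$-neighbour in $I$ is a well-defined bijection, and that $C$ is a clique in $H$. Now pick three distinct vertices $c_1, c_2, c_3 \in C$ with directs $v_{j_k} := \phi(c_k)$, which are distinct by injectivity of $\phi$. Then the induced subgraph $H[\{c_1, c_2, c_3, v_{j_1}, v_{j_2}, v_{j_3}\}]$ consists of the triangle $c_1 c_2 c_3$, the three pendant edges $c_k v_{j_k}$, and nothing else (the cross-edges $c_k v_{j_\ell}$ with $k \neq \ell$ are forbidden by uniqueness of $\phi(c_k)$, and the $v_{j_k} v_{j_\ell}$ by independence of $I$ in $H^2 \supseteq H$). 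This is exactly the $\net$, which is a $\net^*$ with no optional edges added, contradicting Corollary~\ref{cor.netstarfree}.

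The delicate part is the bookkeeping of intermediates, particularly verifying (i) every intermediate lies in $U$, (ii) intermediates from a common $c$ are distinct, and (iii) each $c$ admits at most one direct. Once these are secured, the counting inequality $n - d_c \leq |U| - 1$ simultaneously forces $|U| = n$ and $d_c = 1$, and then the $\net$ identification inside $H$ is automatic.
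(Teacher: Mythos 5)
Your proof is correct and follows essentially the same route as the paper's: pass to $H=KB_m(G')$ via Theorem~\ref{thm.KBm-square}, analyse which edges of $G=H^2$ are ``old'' (present in $H$) versus realised through an intermediate, and exhibit a forbidden induced $\net^*$ in $H$, contradicting Corollary~\ref{cor.netstarfree}. The only real difference is that you organise the counting around the vertices of $C=\bigcap_{v\in I}N_G(v)$ rather than around $I$, which rigorously forces $|U|=n$, makes $C$ a clique of $H$ with each $c\in C$ having a unique direct neighbour in $I$, and thereby fully justifies the step the paper compresses into ``suppose w.l.o.g.\ $\{w_1,w_2,w_3\}\subseteq\bigcap_{v\in I}N_G(v)$''.
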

\begin{proof} 
  By    Theorem~\ref{thm.KBm-square},     $G=(KB_m(H))^2$    for    some
  triangle-free  graph  $H$.   Suppose   there  is  an  independent  set
  $I  =  \{v_1,  \ldots, v_n\}$  in  $G$  with  $n  \geq 3$,  such  that
  $|\bigcap_{v \in I} N_G(v)| \geq 3$.

  Consider  vertex $v_i$.   We  affirm  that there  exists  an old  edge
  $e_i=v_iw_i$, $w_i \in N_G(v_i)$. Let $w$ be a neighbour of $v_i$.  If
  $wv_i$ is not  an old edge, there exists $w_i  \in N_G(v_i)$ such that
  $v_iw_i$, $v_iw$ are  edges in $KB_m(H)$, that is  $v_iw_i$ and $v_iw$
  are old edges.  Observe that $w_i \neq w_j$ when $i \neq j$, since two
  vertices  of  $I$ does  not  have  a  common neighbour  in  $KB_m(H)$,
  otherwise    they   would    be    adjacent    in   $G$.     Therefore
  $|\bigcup_{v \in I} N_G(v)| \geq n$.

  Suppose  there are  only one  old edge  incident to  each vertex  of $I$,
  otherwise $|\bigcup_{v \in I} N_G(v)| > n$.
  Suppose                         w.l.o.g.                          that
  $W =  \{w_1, w_2, w_3\} \subseteq  \bigcap_{v \in I} N_G(v)$.   That is,
  $v_1w_1$, $v_2w_2$ and $v_3w_3$ are  old edges.  Since $v_1w_2$ exists
  and is  a new edge  then, there exists  a vertex $w_{1,2}  \in N_G(v_1)$
  such  that $v_1w_{1,2}$  and  $w_1w_{1,2}$ are  old  edges. Note  that
  $w_{1,2} = w_1$  as $v_1w_1$ is the unique old  edge incident to $v_1$
  and we conclude that $w_1w_2$ is an old edge.  Now following the same
  arguments for 
  edges $v_2w_3$ and $v_3w_1$  we obtain that
  $w_1w_2$,  $w_2w_3$ and  $w_1w_3$  are old  edges.  Finally,  vertices
  $v_1,v_2,v_3,w_1,w_2,w_3$ induce a $\net^*$ in $KB_m(H)$ what leads to
  a contradiction according to Corollary~\ref{cor.netstarfree}.
  Consequently, $|\bigcup_{v \in I} N_G(v)| > n$.
  \qed
\end{proof}

Observe that in Theorem \ref{thm.independent} we do not ask for vertices to have the same neighbourhood. 
Finally, we prove the Conjecture~\ref{conj.montero}  for the case of biclique
graphs    of    triangle-free    graphs    as    a   corollary    of
Theorem~\ref{thm.independent}.

\begin{corollary}\label{cor.conjectureGM}
  Let $G$ be a graph not isomorphic to   the    diamond. 
  Suppose $v_1,     \ldots$,     $v_n$   
  are vertices of $G$ with $n \geq 2$,  such    that
  $N_G(v_1) = N_G(v_2) = \cdots =  N_G(v_n)$, with $|N_G(v_1)|=m$, for $m \leq n$. Then $G$ is not a biclique graph of a
  triangle-free graph $H$.
\end{corollary}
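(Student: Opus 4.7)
The plan is to apply Theorem~\ref{thm.independent} with the set $I = \{v_1, \ldots, v_n\}$. I would first check that $I$ is independent in $G$: if any two twins $v_i, v_j$ were adjacent, then $v_i \in N_G(v_j) = N_G(v_i)$, which is impossible in a simple graph. Since the $v_i$ share a common neighbourhood, one has
$\bigcap_{v \in I} N_G(v) \;=\; \bigcup_{v \in I} N_G(v) \;=\; N_G(v_1),$
a set of size exactly $m$.

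In the main regime $n \geq 3$ and $m \geq 3$, the hypotheses of Theorem~\ref{thm.independent} are satisfied, so the theorem yields $|\bigcup_{v \in I} N_G(v)| > n$. But this union has size $m$, and we are given $m \leq n$, a direct contradiction. So in this regime $G$ cannot be a biclique graph of any triangle-free graph.

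The main obstacle is handling the small cases that lie outside the scope of Theorem~\ref{thm.independent}, namely $n = 2$ with $m \in \{0,1,2\}$ and $n \geq 3$ with $m \leq 2$. The configuration $n = m = 2$ in which the two common neighbours are themselves adjacent is exactly the diamond, which is explicitly excluded from the hypothesis. For the remaining configurations I would fall back on the square-graph viewpoint: by Theorem~\ref{thm.KBm-square} one has $G = (KB_m(H))^2$, and one can try to read off the structure of twins in $G$ from the mutual inclusion relation on bicliques of $H$; any putative configuration should then either force a triangle in $H$, or force a forbidden $\net^*$ in $KB_m(H)$ via Corollary~\ref{cor.netstarfree}, or collapse $G$ to the excluded diamond.
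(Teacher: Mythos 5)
Your main case is exactly the paper's: for $m \geq 3$ (which forces $n \geq 3$ since $m \leq n$) you apply Theorem~\ref{thm.independent} to the independent set of twins, note that $\bigcap_{v \in I} N_G(v) = \bigcup_{v \in I} N_G(v) = N_G(v_1)$ has size $m \geq 3$, and contradict $m \leq n$. That part is correct and matches the paper.

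The gap is in the residual cases $m \leq 2$, which you only sketch. First, your one concrete claim there is wrong: if $n = m = 2$ and the two common neighbours $w_1, w_2$ are adjacent, then $\{v_1, v_2, w_1, w_2\}$ induces a diamond, but $G$ need not \emph{be} the diamond --- the hypothesis only excludes $G \cong$ diamond, not $G$ containing one --- so this configuration is not ``explicitly excluded'' and still has to be refuted. Second, for the remaining configurations your plan (``force a triangle in $H$, or a $\net^*$ via Corollary~\ref{cor.netstarfree}, or collapse to the diamond'') is a programme, not an argument; Corollary~\ref{cor.netstarfree} concerns triangles in $KB_m(H)$ together with pendant-like neighbours and does not obviously bear on a vertex of degree $\leq 2$ in $G$. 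The paper closes these cases differently: for $m = 2$ with $N_G(v_1) = \overline{K_2}$ it invokes Corollary~\ref{cor.DnKB} (the center of an induced $K_{1,2}$ must sit inside a $(D_2)^2$ or $(D_2^-)^2$, forcing it to have degree greater than $2$), and for $N_G(v_1) = K_2$ it cites an external result of Groshaus and Montero (Proposition 4.6 of their paper). Neither of these steps appears in your proposal, so as written the proof is incomplete for $m \leq 2$.
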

\begin{proof} 
  By contradiction, suppose  $G$ is the biclique graph  of some triangle-free graph.
  For  $m  =  2$,  if $N_G(v_1)=  \overline{K_2}$,  a  contradiction  is
  obtained  by  Corollary~\ref{cor.DnKB}.  The  case  $N_G(v_1)=K_2$  is
  proved     by      Groshaus     and      Montero     \cite[Proposition
  4.6]{GroshausMontero2019}.
  
  The case $m \geq 3$ follows directly by Theorem~\ref{thm.independent}
%
  \qed
\end{proof}

\section{Mutually Included Biclique Graphs of Bipartite Graphs}

In the  case of  bipartite graphs  the parts of  the bicliques  are also
bipartitioned, and we can establish a partial order of the bicliques based
on one part of the bipartite graph.

Let $G = (A  \cup B, E)$ be a bipartite graph.  Let  $\mB(G)$ be the set
of bicliques of  $G$.  Define the relation $\prec_G$  over $\mB(G)$ such
that $P \prec_G Q$ when $(P \cap  A) \subset (Q \cap A)$.  The reflexive
closure of $\prec_G$ is the partial order $\preceq_G$.

\begin{lemma}\label{lem.posetIIC}
  For  every   bipartite  graph  $G$   the  poset
  $(\mB(G), \preceq_G)$ is IIC.
\end{lemma}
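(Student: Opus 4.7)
The plan is to verify the two IIC conditions directly, using the biclique-intersection machinery already developed. Fix a bipartite graph $G = (A \cup B, E)$. Since $G$ is bipartite, every biclique $P$ splits cleanly as $P = X_P \cup Y_P$ with $X_P = P \cap A$ and $Y_P = P \cap B$, and $P \preceq_G Q$ is equivalent to $X_P \subseteq X_Q$. Writing $I^-(P), I^+(P)$ for the predecessors and successors intervals in $(\mB(G), \preceq_G)$,
\[
I^-(P) = \{R \in \mB(G) : X_R \subseteq X_P\}, \qquad I^+(P) = \{R \in \mB(G) : X_R \supseteq X_P\}.
\]

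For the predecessors condition, assume $I^-(P) \cap I^-(Q) \neq \emptyset$. Any witness $R$ has $\emptyset \neq X_R \subseteq X_P \cap X_Q$, so $X_P \cap X_Q \neq \emptyset$. If $X_P$ and $X_Q$ are comparable (equivalently, by Lemma~\ref{lem.mutually-subset}, $P$ and $Q$ are mutually included or equal), take $W$ to be the $\preceq_G$-smaller of the two. Otherwise, bipartiteness gives $X_P \cap Y_Q = Y_P \cap X_Q = \emptyset$ (the respective sides live in disjoint parts of $A \cup B$), and Lemma~\ref{lem.bicliqueintersecting} produces a biclique $W$ with $X_W = X_P \cap X_Q$. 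In either sub-case, $I^-(W) = \{R : X_R \subseteq X_P \cap X_Q\} = I^-(P) \cap I^-(Q)$, as required.

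For the successors condition, assume $I^+(P) \cap I^+(Q) \neq \emptyset$. Any witness $R$ satisfies $X_R \supseteq X_P \cup X_Q$, so by Observation~\ref{obs.Nstar} one has $Y_R = N_G^*(X_R) \subseteq N_G^*(X_P) \cap N_G^*(X_Q) = Y_P \cap Y_Q$, forcing $Y_P \cap Y_Q \neq \emptyset$. If $P$ and $Q$ are comparable, take $W$ to be the $\preceq_G$-larger one. Otherwise, applying Lemma~\ref{lem.bicliqueintersecting} with the roles of $X$ and $Y$ interchanged yields a biclique $W$ with $Y_W = Y_P \cap Y_Q$ and $X_W \supseteq X_P \cup X_Q$. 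The inclusion $I^+(W) \subseteq I^+(P) \cap I^+(Q)$ is immediate. For the reverse inclusion, if $X_R \supseteq X_P \cup X_Q$ then $Y_R \subseteq Y_P \cap Y_Q = Y_W$, so Observation~\ref{obs.Nstar} gives $X_R = N_G^*(Y_R) \supseteq N_G^*(Y_W) = X_W$, placing $R$ in $I^+(W)$.

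The mildly subtle point is the successors direction: one might hope that $W$ satisfies $X_W = X_P \cup X_Q$, but in general $X_W$ is strictly larger. The trick is to pin $W$ down from the $Y$-side (via $Y_W = Y_P \cap Y_Q$) and then recover $I^+(W) = I^+(P) \cap I^+(Q)$ through a short $N_G^*$ computation using Observation~\ref{obs.Nstar}.
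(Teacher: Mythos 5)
Your proof is correct and follows essentially the same route as the paper: both handle the comparable case trivially and otherwise invoke Lemma~\ref{lem.bicliqueintersecting} (in its $X$/$Y$-swapped form for the successor intervals) together with the $N_G^*$ computations from Observation~\ref{obs.Nstar}. The only difference is presentational --- you argue directly from the nonemptiness of the interval intersection where the paper argues the contrapositive, and you spell out the successor-side verification that the paper compresses into ``the same for part $B$.''
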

\begin{proof}
  Let  $G  = (A  \cup  B,  E)$ be  a  bipartite  graph, with  the  poset
  $\mP = (\mB(G), \preceq_G)$ and let $P$ and $Q$ be two different bicliques
  of $G$.

  If             $P             \preceq_G            Q$             then
  $P \in  I_{\mP}^-(P) \cap I_{\mP}^-(Q) =  I_{\mP}^-(P) \neq \emptyset$
  and  $P$ is  the maximum  of $I_{\mP}^-(P)  \cap I_{\mP}^-(Q)$.   Also
  $Q \in  I_{\mP}^+(P) \cap I_{\mP}^+(Q) =  I_{\mP}^+(Q) \neq \emptyset$
  and $Q$ is the minimum of $I_{\mP}^+(P) \cap I_{\mP}^+(Q)$.

  Now suppose $P$ and $Q$ are not comparable.

  If     $P     \cap     Q      \cap     A     =     \emptyset$     then
  $I_{\mP}^-(P) \cap I_{\mP}^-(Q) = \emptyset$.

  Suppose  $P  \cap Q  \cap  B  = \emptyset$  and  there  is a  biclique
  $R    \in   I_{\mP}^+(P)    \cap   I_{\mP}^+(Q)$.     By   definition,
  $(P      \cap     A)      \subseteq      (R      \cap     A)$      and
  $(Q    \cap     A)    \subseteq     (R    \cap    A)$,     that    is,
  $((P   \cup   Q)    \cap   A)   \subseteq   (R    \cap   A)$.    Then,
  $N_G^*(R   \cap    A)   \subseteq   (P    \cap   Q   \cap    B)$   and
  $(P     \cap    Q     \cap    B)     \neq    \emptyset$.     So,    if
  $P      \cap      Q      \cap       B      =      \emptyset$      then
  $I_{\mP}^+(P) \cap I_{\mP}^+(Q) = \emptyset$.

  If    $P    \cap    Q    \cap   A    \neq    \emptyset$    then,    by
  Lemma~\ref{lem.bicliqueintersecting},  there is  a  biclique $R$  such
  that $R \in I_{\mP}^-(P) \cap  I_{\mP}^-(Q) \neq \emptyset$ and $R$ is
  the maximum of $I_{\mP}^-(P) \cap I_{\mP}^-(Q)$. The same for part $B$
  and $I_{\mP}^+(P) \cap I_{\mP}^+(Q)$.

  So, the poset $\mP = (\mB(G), \preceq_G)$ is IIC.
  \qed
\end{proof}

We show that $KB_m($bipartite$)$ is an IIC-comparability graph.

\begin{lemma}\label{lem.KBm-comparability}
  If $G$  is a  bipartite graph then, $KB_m(G)$ is  an IIC-comparability
  graph.
\end{lemma}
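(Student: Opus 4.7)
The plan is to show that $KB_m(G)$ is exactly the comparability graph of the poset $\mP = (\mB(G), \preceq_G)$ defined before Lemma~\ref{lem.posetIIC}. Combined with Lemma~\ref{lem.posetIIC}, which asserts that $\mP$ is IIC, this will immediately give that $KB_m(G)$ is an IIC-comparability graph.

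First I fix the canonical bipartition of each biclique: for $G = (A\cup B, E)$ bipartite and each biclique $P$, set $X_P = P\cap A$ and $Y_P = P\cap B$. Since $G$ is bipartite (in particular $K_3$-free) and since the two sides of any biclique of $G$ must lie in different parts of $A\cup B$, the hypothesis of Lemma~\ref{lem.mutually-subset}, namely $X_P \cap Y_Q = Y_P \cap X_Q = \emptyset$, is satisfied automatically for every pair of bicliques $P,Q$.

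Next I verify the two directions of ``$PQ$ is an edge of $KB_m(G)$ iff $P$ and $Q$ are comparable under $\preceq_G$''. If $PQ$ is an edge of $KB_m(G)$, then by definition $P$ and $Q$ are mutually included; by Lemma~\ref{lem.mutually-subset} this means $X_P \subset X_Q$ or $X_Q \subset X_P$, i.e., $P \prec_G Q$ or $Q \prec_G P$. Conversely, if $P$ and $Q$ are comparable and distinct under $\preceq_G$, say $P \prec_G Q$, then $X_P \subset X_Q$, and together with the automatic disjointness above, Lemma~\ref{lem.mutually-subset} yields that $P$ and $Q$ are mutually included, so $PQ$ is an edge of $KB_m(G)$. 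Hence the edge set of $KB_m(G)$ coincides with the comparability relation of $\preceq_G$, and Lemma~\ref{lem.posetIIC} finishes the argument.

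I do not expect a serious obstacle: the two earlier lemmas package the substantive content. The only point that deserves attention is the small bookkeeping of fixing the canonical side $X_P \subseteq A$, $Y_P \subseteq B$ for every biclique so that the symbol $\prec_G$ aligns with the hypothesis of Lemma~\ref{lem.mutually-subset}; once that convention is stated, both implications are one-line applications of that lemma together with the definition of $\prec_G$.
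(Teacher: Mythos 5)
Your proof is correct and follows essentially the same route as the paper: both identify $KB_m(G)$ with the comparability graph of $(\mB(G),\preceq_G)$ via Lemma~\ref{lem.mutually-subset} and then invoke Lemma~\ref{lem.posetIIC}. Your explicit check that the disjointness hypothesis of Lemma~\ref{lem.mutually-subset} holds automatically for the canonical bipartition $X_P = P\cap A$, $Y_P = P\cap B$ is a small bookkeeping step the paper leaves implicit, but the argument is the same.
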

\begin{proof}
  Let  $G  =  (A \cup  B,  E)$  be  a  bipartite graph  with  the  poset
  $(\mB(G),   \preceq_G)$.    By  Lemma~\ref{lem.mutually-subset},   two
  different bicliques, $P$ and $Q$, of $G$ are mutually included iff $P$
  and $Q$ are comparable by $\preceq_G$.

  So,   $KB_m(G)$   is   the    comparability   graph   of   the   poset
  $(\mB(G), \preceq_G)$.

  Moreover,     as     $(\mB(G),      \preceq_G)$     is     IIC,     by
  Lemma~\ref{lem.posetIIC}, $KB_m(G)$  is an IIC-com\-pa\-ra\-bi\-li\-ty
  graph.
  \qed
\end{proof}

Let     $\mP    =     (V,\leq)$    be     a    poset.      Define    the
\define{predecessors-successors              bipartite              graph
  $G_{\mP} = (A \cup B, E)$} as follows:
$A = \{ a_v \mid v \in V\}$;
$B = \{ b_v \mid v \in V\}$;
$E = \{ a_ub_v \mid u \leq v$, for $u, v \in V \}$.

Let   $X_v   =   \{a_u   \in   A  \mid   u   \in   I_{\mP}^-(v)\}$   and
$Y_v = \{b_w \in B \mid w \in I_{\mP}^+(v)\}$.

\begin{lemma}\label{lem.XvYv=bicliques}
  Given a  poset $\mP =  (V,\leq)$ and its  predecessors-successors graph
  $G_{\mP}$.  For every $v \in V$, $X_vY_v$ is a biclique of $G_{\mP}$.
\end{lemma}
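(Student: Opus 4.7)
The plan is to verify the two conditions that make $X_vY_v$ a biclique of $G_{\mP}$: that $X_v \cup Y_v$ induces a complete bipartite subgraph, and that this subgraph is maximal. Since $G_{\mP}$ is bipartite (edges only go between $A$ and $B$), it is $K_3$-free, so I can invoke Observation~\ref{obs.Nstar} and reduce maximality to checking that $N_{G_{\mP}}^*(X_v) = Y_v$ and $N_{G_{\mP}}^*(Y_v) = X_v$.

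First I would establish completeness by transitivity of $\leq$. Take any $a_u \in X_v$ and any $b_w \in Y_v$; by definition $u \leq v$ and $v \leq w$, hence $u \leq w$, and so $a_u b_w \in E$ by the definition of $G_{\mP}$. In particular this shows $Y_v \subseteq N_{G_{\mP}}^*(X_v)$ and $X_v \subseteq N_{G_{\mP}}^*(Y_v)$.

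Next I would check the reverse inclusions, which give maximality. Note that $v \leq v$, so $a_v \in X_v$ and $b_v \in Y_v$. Let $b_w \in N_{G_{\mP}}^*(X_v)$; since $b_w$ must be adjacent to $a_v$, the edge rule forces $v \leq w$, i.e., $w \in I_{\mP}^+(v)$, so $b_w \in Y_v$. The same argument with roles swapped (using $b_v \in Y_v$) shows that any $a_u \in N_{G_{\mP}}^*(Y_v)$ satisfies $u \leq v$, hence $a_u \in X_v$. Note also that $X_v$ and $Y_v$ are each independent sets, as both are entirely contained in one part of the bipartition of $G_{\mP}$.

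Finally, with $N_{G_{\mP}}^*(X_v) = Y_v$ and $N_{G_{\mP}}^*(Y_v) = X_v$ and both sets independent, Observation~\ref{obs.Nstar} applied to the $K_3$-free graph $G_{\mP}$ yields that $X_vY_v$ is a biclique of $G_{\mP}$. There is no real obstacle here; the only thing to be careful about is making sure that the two ``anchor'' vertices $a_v \in X_v$ and $b_v \in Y_v$ are available, which is guaranteed by the reflexivity of $\leq$, so that the closed neighbourhood characterisation $N^*$ indeed pins down $Y_v$ and $X_v$ exactly rather than some superset.
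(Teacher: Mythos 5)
Your proof is correct and rests on the same two ideas as the paper's: transitivity of $\leq$ gives completeness, and the reflexivity anchors $a_v \in X_v$, $b_v \in Y_v$ force any vertex adjacent to all of one side into the other side. The only cosmetic difference is that you package maximality through Observation~\ref{obs.Nstar} (noting $G_{\mP}$ is bipartite, hence $K_3$-free), whereas the paper argues maximality directly by contradiction; the underlying step is identical.
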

\begin{proof}
  Let $\mP =  (V,\leq)$ be a poset and  its predecessors-successors graph
  $G_{\mP}$.
  
  Let $v \in V$.  As for every $a_u \in X_v$,
  $u \leq v$  and for every $b_w \in  Y_v$, $v \leq w$, then  $u \leq w$
  and $a_ub_w \in E(G_{\mP})$.  So $X_vY_v$ induces a complete
  bipartite subgraph of $G_{\mP}$.

  Now suppose $X_vY_v$  is not maximal. Then w.l.o.g. there  is a vertex
  $a_x  \not\in  X_vY_v$  such  that $X_vY_v  \cup  \{a_x\}$  induces  a
  complete bipartite subgraph  of $G_{\mP}$. By definition,  $x \leq w$,
  for every  $w \in  I_{\mP}^+(v)$. But $v  \in I_{\mP}^+(v)$,  and then
  $x \leq v$. Consequently, $a_x \in  X_v$ and $X_vY_v$ is a biclique of
  $G_{\mP}$.
  \qed
\end{proof}

For    every    subset    $S     \subseteq    A    \cup    B$,    define
$V(S) = \{v  \in V \mid a_v \in  S $ or $b_v \in S\}$  to be \define{the
  base set of $S$}.

\begin{lemma}\label{lem.XY=XvYv}
  Given an  IIC poset  $\mP =  (V,\leq)$ and  its predecessors-successors
  graph $G_{\mP}$.  Every biclique of $G_{\mP}$ is equal to $X_vY_v$ for
  some $v \in V$.
\end{lemma}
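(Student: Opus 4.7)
The plan is to take an arbitrary biclique $XY$ of $G_{\mP}$ (with $X \subseteq A$ and $Y \subseteq B$) and use the IIC property of $\mP$ to produce the single element $v \in V$ that witnesses $X = X_v$ and $Y = Y_v$. The strategy is dual to Lemma~\ref{lem.XvYv=bicliques}: there one builds a biclique from $v$, here one extracts $v$ from a biclique.

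First I would introduce the base sets $U = V(X)$ and $W = V(Y)$, so that $X = \{a_u \mid u \in U\}$ and $Y = \{b_w \mid w \in W\}$. Since $XY$ is a complete bipartite subgraph of $G_{\mP}$, the definition of the edge set of $G_{\mP}$ gives $u \leq w$ for every $u \in U$ and $w \in W$; equivalently, $U \subseteq \bigcap_{w \in W} I_{\mP}^-(w)$ and $W \subseteq \bigcap_{u \in U} I_{\mP}^+(u)$. Maximality of $XY$ upgrades both inclusions to equalities: any $u'$ in $\bigcap_{w \in W} I_{\mP}^-(w)$ satisfies $u' \leq w$ for every $w \in W$, so $XY \cup \{a_{u'}\}$ would remain complete bipartite and therefore $a_{u'} \in X$, i.e., $u' \in U$; the argument is symmetric for $W$.

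Next I would apply the IIC hypothesis on $\mP$. The intersection $\bigcap_{w \in W} I_{\mP}^-(w)$ is nonempty because it equals $U$ and a biclique has a nonempty part on each side. Extending the pairwise IIC condition to finite intersections by a straightforward induction on $|W|$ (if $I_{\mP}^-(w_1) \cap \cdots \cap I_{\mP}^-(w_k)$ is nonempty, so is each pairwise intersection, so IIC collapses $w_1,w_2$ to some $w_{12}$, and one iterates) yields an element $v \in V$ with $I_{\mP}^-(v) = U$. Therefore $X = X_v$. It then remains to verify $Y = Y_v$, i.e., $W = I_{\mP}^+(v)$. The inclusion $W \subseteq I_{\mP}^+(v)$ is immediate, since $v \in I_{\mP}^-(v) = U$ gives $v \leq w$ for every $w \in W$. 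For the reverse, if $v \leq w'$ then for every $u \in U = I_{\mP}^-(v)$ we have $u \leq v \leq w'$, so $a_u b_{w'}$ is an edge of $G_{\mP}$; maximality of $XY$ forces $b_{w'} \in Y$, so $w' \in W$.

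The only technical point is the inductive lift of IIC from pairs to finite families, and this is a routine bookkeeping step rather than a genuine obstacle; the substantive content of the proof is the interplay between maximality of the biclique and the intersection-closure of the predecessor/successor intervals.
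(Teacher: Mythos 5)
Your proof is correct, but it reaches the witness $v$ by a genuinely different route than the paper. You compute the base set $U = V(X)$ explicitly as $\bigcap_{w \in W} I_{\mP}^-(w)$, using maximality of the biclique to upgrade the inclusion to an equality, and then invoke the IIC property --- lifted from pairs to finite families by induction --- to realize this intersection as a single interval $I_{\mP}^-(v)$. The paper instead argues by contradiction on the order structure of $V(X)$: if $V(X)$ had two distinct maximal elements $i$ and $j$, the pairwise IIC condition applied to $I_{\mP}^+(i) \cap I_{\mP}^+(j)$ (nonempty because $a_i$ and $a_j$ share a neighbour in $Y$) yields a minimum $m$ with $V(Y) \subseteq I_{\mP}^+(m)$, whence $a_m \in X$ by maximality of the biclique and $i$, $j$ fail to be maximal --- a contradiction; so $V(X)$ has a maximum $v$, which is then checked to give $XY = X_vY_v$. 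The two arguments single out the same element, since your $v$ satisfies $v \in I_{\mP}^-(v) = U$ and dominates all of $U$, i.e.\ it is exactly the maximum of $V(X)$. What the paper's route buys is that it uses the IIC hypothesis only on pairs, exactly as stated, so no induction is needed; what your route buys is a direct, non-contradiction derivation in which the role of biclique maximality (forcing $U = \bigcap_{w\in W} I_{\mP}^-(w)$ and $W = \bigcap_{u\in U} I_{\mP}^+(u)$) is made explicit. Your inductive lift of IIC to finite intersections is indeed routine and correct, and the nonemptiness you need ($U, W \neq \emptyset$) holds because every vertex of $G_{\mP}$ has a neighbour (as $v \leq v$ gives the edge $a_vb_v$), so no biclique of $G_{\mP}$ has an empty part.
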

\begin{proof}
  Let $\mP  = (V,\leq)$ be  an IIC poset and  its predecessors-successors
  graph $G_{\mP}$. And let $XY$ be any biclique $XY$ of $G_{\mP}$.

  Consider the  sets $V(X)$ and  $V(Y)$.  Suppose  $V(X)$ do not  have a
  maximum  element. Then  there are  at  least two  maximal elements  in
  $V(S)$,     $i$    and     $j$.      As    $\mP$     is    IIC     and
  $I_{\mP}^+(i) \cap  I_{\mP}^+(j) \neq  \emptyset$ (as $a_i$  and $a_j$
  have    at    least    one    neighbour    in    common    in    $Y$),
  $I_{\mP}^+(i)   \cap   I_{\mP}^+(j)$   has   a   minimum,   $m$.    As
  $V(Y)  \subseteq   I_{\mP}^+(i)  \cap  I_{\mP}^+(j)$,  $a_m   \in  X$,
  consequently $m \in V(X)$. Moreover, $i  \leq m$ and $j \leq m$, which
  implies that  $i$ and  $j$ are  not maximal  elements of  $V(X)$.  So,
  $V(X)$ has  a maximum.   Using a  similar reasoning  we can  show that
  $V(Y)$ has a minimum. Let $v$  be the maximum element of $V(X)$.  Then
  $v  \in V(Y)$  and $v$  is  the minimum  element of  $V(Y)$. That  is,
  $XY = X_vY_v$.
  \qed
\end{proof}

We show now that the class  of IIC-comparability is exactly the class of
$KB_m($bipartite$)$. 

\begin{theorem}\label{thm.KBmBipartite}
  $KB_m($bipartite$) = $ IIC-comparability.
\end{theorem}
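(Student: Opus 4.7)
The statement is set-theoretic equality, so I would prove it by showing two inclusions. The inclusion $KB_m(\text{bipartite}) \subseteq$ IIC-comparability is essentially already done: it is exactly the content of Lemma~\ref{lem.KBm-comparability}. So the real work is in the reverse inclusion, namely showing that every IIC-comparability graph is realizable as $KB_m(G)$ for some bipartite $G$.

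For that direction, the plan is to use the construction already introduced, the \emph{predecessors-successors bipartite graph} $G_{\mP}$. Given an IIC-comparability graph $H$, pick a poset $\mP = (V,\leq)$ whose comparability graph is $H$ and whose predecessor/successor intervals are closed under intersection. Build $G_{\mP} = (A \cup B, E)$ as defined. I then want to exhibit an isomorphism $\varphi \colon V \to V(KB_m(G_{\mP}))$ given by $v \mapsto X_v Y_v$.

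The key verifications, in order, are the following. First, $\varphi$ is well-defined and maps into vertices of $KB_m(G_{\mP})$: this is Lemma~\ref{lem.XvYv=bicliques}. Second, $\varphi$ is surjective: this is exactly Lemma~\ref{lem.XY=XvYv}, which requires IIC. Third, $\varphi$ is injective: if $X_u Y_u = X_v Y_v$ then $I_{\mP}^-(u) = I_{\mP}^-(v)$, and since $u \in I_{\mP}^-(u)$ gives $u \leq v$ and symmetrically $v \leq u$, antisymmetry of $\mP$ forces $u = v$. Finally, $\varphi$ preserves adjacency: since $G_{\mP}$ is bipartite with parts $A, B$ and $X_v \subseteq A$, $Y_v \subseteq B$, the hypothesis $X_P \cap Y_Q = Y_P \cap X_Q = \emptyset$ of Lemma~\ref{lem.mutually-subset} is automatic, so the bicliques $X_uY_u$ and $X_vY_v$ are mutually included in $G_{\mP}$ iff $X_u \subset X_v$ or $X_v \subset X_u$, iff $I_{\mP}^-(u) \subset I_{\mP}^-(v)$ or $I_{\mP}^-(v) \subset I_{\mP}^-(u)$, iff $u$ and $v$ are comparable in $\mP$, iff $uv \in E(H)$.

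The potential obstacle is confirming that Lemma~\ref{lem.XY=XvYv} genuinely gives surjectivity in this setting; the lemma is stated for a general IIC poset, but we must also be sure that the comparability graph of $\mP$ is indeed $H$, which is the defining property of $H$ being a comparability graph with the associated IIC poset. Modulo that, the argument is a straightforward assembly of the lemmas already established, and the conclusion $H \cong KB_m(G_{\mP})$ places $H$ in $KB_m(\text{bipartite})$, completing the equality.
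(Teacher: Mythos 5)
Your proposal is correct and follows essentially the same route as the paper's proof: the forward inclusion via Lemma~\ref{lem.KBm-comparability}, and the reverse inclusion by building $G_{\mP}$ and using Lemmas~\ref{lem.XvYv=bicliques}, \ref{lem.XY=XvYv} and~\ref{lem.mutually-subset} to show $v \mapsto X_vY_v$ is an isomorphism onto $KB_m(G_{\mP})$. Your explicit injectivity check is a small refinement the paper leaves implicit, but the argument is otherwise identical.
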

\begin{proof}
  By Lemma~\ref{lem.KBm-comparability}, $KB_m($bipartite$) \subseteq $
  IIC-comparability.

  Now let  $H$ be an IIC-comparability  graph, $H$, with its  IIC poset,
  $\mP  =   (V(H)$,  $\leq)$,  and  its   predecessors-successors  graph
  $G_{\mP}$. Recall that $G_{\mP}$ is a bipartite graph.

  By Lemmas~\ref{lem.XvYv=bicliques}  and~\ref{lem.XY=XvYv}, there  is a
  bijection  $\phi$  between the  vertex  set  of  $H$  and the  set  of
  bicliques of $G_{\mP}$, given by $\phi(v) = X_vY_v$.

  Let $u, v \in V(H)$, with $u \neq v$.

  Suppose $uv \in E(H)$. Then $u \leq v$ or $v \leq u$ (as $H$ is a
  comparability  graph with  poset $\mP$).   So $X_u  \subseteq X_v$  or
  $X_v \subseteq  X_u$ and $X_uY_u$  and $X_vY_v$ are  mutually included
  (by         Lemma~\ref{lem.mutually-subset})          and         then
  $\phi(u)\phi(v) \in E(KB_m(G_{\mP}))$.

  Now   suppose    $\phi(u)\phi(v)   \in    E(KB_m(G_{\mP}))$.   Then
  $X_u  \subseteq X_v$  or $X_v  \subseteq X_u$.  As $a_u  \in X_u$  and
  $a_v \in X_v$, then $u \leq v$ or $v \leq u$ and $uv \in E(H)$.

  So,     $\phi$     is     an     isomorphism     and     $H     \simeq
  KB_m(G_{\mP})$. Consequently IIC-com\-pa\-ra\-bi\-li\-ty $\subseteq KB_m($bipartite$)$.

  Therefore, $KB_m($bipartite$) = $ IIC-comparability.
  \qed
\end{proof}

Then,              considering             Theorems~\ref{thm.KBm-square}
and~\ref{thm.KBmBipartite} we conclude  the characterization of biclique
graphs of bipartite graphs.

\begin{corollary}\label{cor.Snstar}
  $KB($bipartite$) = ($IIC-comparability$)^2$.
\end{corollary}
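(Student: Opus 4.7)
The plan is to derive the corollary by simply composing the two main results of the paper, since every bipartite graph is $K_3$-free. I would set up the argument as a double inclusion, each direction using one of Theorems~\ref{thm.KBm-square} and~\ref{thm.KBmBipartite} in sequence.

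For the forward inclusion $KB(\text{bipartite}) \subseteq (\text{IIC-comparability})^2$, I would take an arbitrary $H \in KB(\text{bipartite})$, so $H = KB(G)$ for some bipartite graph $G$. Since $G$ is bipartite, it is in particular $K_3$-free, so Theorem~\ref{thm.KBm-square} applies and gives $KB(G) = (KB_m(G))^2$. Then Theorem~\ref{thm.KBmBipartite} certifies that $KB_m(G)$ is an IIC-comparability graph, placing $H$ in $(\text{IIC-comparability})^2$.

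For the reverse inclusion, I would start from an arbitrary $H \in (\text{IIC-comparability})^2$, so $H = K^2$ for some IIC-comparability graph $K$. By Theorem~\ref{thm.KBmBipartite}, we can realise $K$ as $KB_m(G)$ for some bipartite graph $G$ (concretely, via the predecessors-successors construction used in the proof of that theorem). Applying Theorem~\ref{thm.KBm-square} to this $G$ yields $KB(G) = (KB_m(G))^2 = K^2 = H$, so $H \in KB(\text{bipartite})$.

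There is no real obstacle: the work has been done in the two theorems being combined, and the only thing to check is that the ``bipartite $\Rightarrow$ $K_3$-free'' hypothesis lets the first theorem be invoked in both directions. I would therefore keep the proof to essentially two lines, one per inclusion, explicitly citing Theorems~\ref{thm.KBm-square} and~\ref{thm.KBmBipartite}.
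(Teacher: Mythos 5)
Your proof is correct and matches the paper's own argument, which cites exactly Theorems~\ref{thm.KBm-square} and~\ref{thm.KBmBipartite}; you have merely spelled out the double inclusion and the observation that bipartite graphs are $K_3$-free, which the paper leaves implicit.
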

\begin{proof}
  By Theorems~\ref{thm.KBm-square} and~\ref{thm.KBmBipartite}.
  \qed
\end{proof}

\section{Final Remarks}

In this work we prove that the biclique graph of a $K_3$-free graph is the square of
some particular graph called $KB_m$ extending all known properties of square graphs to biclique graphs of triangle-free graphs and providing a tool to prove other properties. Some published properties about biclique graphs can be easily proved if we restrict to square graphs.

We prove that      $KB(\mG)     \not\subseteq      (\mG)^2$      and $(\mG)^2  \not\subseteq KB(\mG)$ by presenting the graphs $4$-\wheel\ (Figure~\ref{fig:4wheel}),  that is
the biclique graph of some graph but  is not the square of any graph, and the square graph of  the \net\  (Figure~\ref{fig:netgraph}),  that is  not a  biclique graph  of any  graph. 
 We conclude that the  biclique  graphs of  $K_3$-free
graphs are  in the intersection  $KB(\mG) \cap  (\mG)^2$, but it  is not
known   if  $KB(K_3$-free$)   \subsetneq   KB(\mG)   \cap  (\mG)^2$   or
$KB(K_3$-free$) = KB(\mG) \cap (\mG)^2$. 

\sloppy{Moreover, we give the first known property (which was conjectured by Groshaus and Montero) of biclique graphs that does not hold for square graphs. }

We study properties of $KB_m$ graphs  in terms of ordering  of the vertices of  their cliques and
neighbourhood  (Theorem~\ref{teo.KBm-clique-order}). That  property lead
to a forbidden structure ($\net^*$) for mutually included biclique graphs.
We need to find other
properties  of these  graphs in  order to  better understand  their square
graphs.



We also show that the class of  biclique graphs of
bipartite graphs are exactly the square of a subclass of comparability
graphs (IIC-comparability). These characterizations (partial in the case
of  $K_3$-free  graphs)  do  not lead  to  polynomial  time  recognition
algorithms. However, it  gives another  tool to  study the  problem of
recognizing biclique graphs and their properties.

\bibliographystyle{splncs04}
\bibliography{bicliques}




%
\end{document}